\documentclass{article} [11 pt]

\usepackage{amsmath,amsthm}     
\usepackage{graphicx}     
\usepackage{hyperref} 
\usepackage{url}
\usepackage{amsfonts} 
\usepackage{amssymb}
\usepackage{siunitx}
\usepackage[mathscr]{euscript}
\usepackage{adjustbox}

\usepackage{multicol}
\usepackage{enumerate}
\usepackage[margin=1.0in,top=1.0in, bottom=1.8in]{geometry}

\sisetup{
  inter-unit-product=\ensuremath{{\cdot}},
}


%
\addtolength{\textheight}{.75in}

\ifpdf \DeclareGraphicsRule{.eps}{pdf}{.pdf}{`ps2pdf #1 `dirn ame
#1`/`basename #1 .eps`.pdf} \fi


\DeclareSymbolFont{msbm}{U}{msb}{m}{n}
\DeclareMathSymbol{\C}{\mathalpha}{msbm}{'103}
\DeclareMathSymbol{\R}{\mathalpha}{msbm}{'122}
\DeclareMathSymbol{\Q}{\mathalpha}{msbm}{'121}
\DeclareMathSymbol{\Z}{\mathalpha}{msbm}{'132}
\DeclareMathSymbol{\N}{\mathalpha}{msbm}{'116}
\DeclareMathSymbol{\K}{\mathalpha}{msbm}{'113}


\newtheorem{definition}{Definition}   
\newtheorem{lemma}{Lemma}
\newtheorem{cor}[lemma]{Corollary}
\newtheorem{theorem}[lemma]{Theorem}

\newcommand{\old}[1]{{}}

\newcommand{\bi}{\begin{itemize}}
\newcommand{\ei}{\end  {itemize}}
\newcommand{\bt}{\begin{tabbing}}
\newcommand{\et}{\end  {tabbing}}
\newcommand{\be}{\begin{enumerate}}
\newcommand{\ee}{\end  {enumerate}}

\newtheoremstyle{obs}
  {\topsep} 
  {0pt} 
  {} 
  {} 
  {\bfseries} 
  {.} 
  {.5em} 
  {} 

\theoremstyle{obs} \newtheorem{observ}{Observation}


\makeatletter
\def\begin@lgo{\begin{minipage}{1in}\begin{tabbing}
        \quad\=\qquad\=\qquad\=\qquad\=\qquad\=\qquad\=\qquad\=\kill}
\def\end@lgo{\end{tabbing}\end{minipage}}

\makeatother

\makeatletter
\@ifundefined{abovecaptionskip}{\newlength\abovecaptionskip}
\long\def\@makecaption#1#2{
   \vskip \abovecaptionskip
   \setbox\@tempboxa\hbox{{\sf\footnotesize \textbf{#1.} #2}}
   \ifdim \wd\@tempboxa >\hsize         
       {\sf\footnotesize \textbf{#1.} #2\par}
     \else                              
       \hbox to\hsize{\hfil\box\@tempboxa\hfil}
   \fi}
\dbltextfloatsep 18pt plus 2pt minus 4pt
\textfloatsep 18pt plus 2pt minus 4pt   
\abovecaptionskip 6pt           
\makeatother


%
%

\parskip0.5ex

\title{From Hall's Marriage Theorem to Boolean Satisfiability and Back}
\author{
Jonathan Lenchner \thanks{IBM T. J. Watson Research Center, Yorktown Heights, NY USA; lenchner@us.ibm.com.}
}                    
\date{}

\pagenumbering{arabic}

\begin{document}


\maketitle

\section*{Abstract}
Motivated by the application of Hall's Marriage Theorem in various LP-rounding problems, we introduce a generalization of the classical marriage problem (CMP) that we call the Fractional Marriage Problem. We show that the Fractional Marriage Problem is NP-Complete by reduction from Boolean Satisfiability (SAT). We show that when we view the classical marriage problem (a.k.a. bipartite matching) as a sub-class of SAT we get a new class of polynomial-time satisfiable SAT instances that we call CMP-SAT, different from the classically known polynomial-time satisfiable SAT instances 2-SAT, Horn-SAT and XOR-SAT.

We next turn to the problem of recognizing CMP-SAT instances, first using SAT embeddings, and then using their embeddings within the universe of Fractional Marriage Problems (FMPs). In the process we are led to another generalization of the CMP that we call the Symmetric Marriage Problem, which is polynomial time decidable and leads to a slight enlargement of the CMP-SAT class. We develop a framework for simplifying FMP problems to identify CMP instances that we call Fragment Logic. Finally we give a result that sheds light on how expressive the FMP need be to still be NP-Complete. The result gives a second NP-Complete reduction of the FMP, this time to Tripartite Matching. We conclude with a wide assortment of suggested additional problems.

\section{Introduction}

Amongst the class of facility location problems \cite{Wiki-FacilityLocation:2019, DREZNER:2004} is the so-called $K$-center problem, which asks one to optimally place $K$ centers, or depots, so that the maximum distance any client must travel to get to the closest center is minimized.  This problem can be considered geometrically or on a metric (or even non-metric) graph. On a metric graph, our chief area of interest, there are numerous possibilities, including those where the centers have unbounded capacity \cite{HOCHBAUM:1985}, bounded but uniform capacities \cite{KHULLER:1996, AN:2015}, and bounded but not necessarily uniform capacities \cite{AN:2015, CYGAN:2012}. All versions are known to be NP-Hard \cite{HOCHBAUM:1985} and so approximate solutions are sought. The most common technique used to obtain the approximation results is that of LP-rounding \cite{VAZIRANI:2003}, where one starts with an integer or mixed integer linear program, and relaxes the problem to a fully linear program, solving that in polynomial time to get a lower bound on the optimal solution. From the lower bound obtained by the LP one then ``pushes around'' the possibly fractional amount centers are opened as well as the fractional amount that the centers allocate their capacity to clients to get a fully integral, but approximate solution. The last ingredient of the argument is often the use of Hall's Marriage Theorem (see, e.g., \cite{AN:2015}), which we now state.

\begin{theorem} \label{thm:hmt_classical} \textbf{Hall's Marriage Theorem\cite{HALL:1935}} 
Let $\mathscr{S}$ be a finite collection of subsets of a finite set $X$, possibly with repetition. If for any $\mathscr{W} \subset \mathscr{S}$, $|\mathscr{W}| \leq |\cup_{W \in \mathscr{W}} W|$ then there is an injective function $T:\mathscr{S} \rightarrow X$ such that $T(S) \in S$ for all $S \in \mathscr{S}$. Conversely, given a collection $\mathscr{S}$ of subsets of $X$, if such a function $T$ exists, then for any $\mathscr{W} \subset \mathscr{S}$, $|\mathscr{W}| \leq |\cup_{W \in \mathscr{W}} W|$.
\end{theorem}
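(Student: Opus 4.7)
The plan is to handle the two directions separately. The converse (existence of $T$ implies Hall's condition) is essentially immediate: if $T$ is injective with $T(S) \in S$, then for any $\mathscr{W} \subset \mathscr{S}$ the image $T(\mathscr{W})$ is a subset of $\cup_{W \in \mathscr{W}} W$, and injectivity gives $|\mathscr{W}| = |T(\mathscr{W})| \leq |\cup_{W \in \mathscr{W}} W|$. The substance is the forward direction, and I plan to prove it by strong induction on $n = |\mathscr{S}|$. The base $n = 1$ is trivial since Hall's condition forces the lone set to be nonempty, so any element of it suffices.

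For the inductive step, I would split into two cases based on how tightly Hall's condition is met. In the \emph{slack case}, every proper nonempty $\mathscr{W} \subsetneq \mathscr{S}$ satisfies the strict inequality $|\mathscr{W}| < |\cup_{W \in \mathscr{W}} W|$. Here I would pick any $S \in \mathscr{S}$ and any $x \in S$, provisionally set $T(S) = x$, and then apply induction to $\mathscr{S}' = \mathscr{S} \setminus \{S\}$ over the ground set $X' = X \setminus \{x\}$. The point is that for any $\mathscr{W}' \subseteq \mathscr{S}'$ the union $\cup_{W \in \mathscr{W}'}(W \cap X')$ loses at most the single element $x$ relative to $\cup_{W \in \mathscr{W}'} W$, and the strict slack bound $|\mathscr{W}'| + 1 \leq |\cup_{W \in \mathscr{W}'} W|$ absorbs that loss, preserving Hall's condition on the reduced instance.

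In the \emph{tight case}, some proper nonempty $\mathscr{W} \subsetneq \mathscr{S}$ satisfies $|\mathscr{W}| = |\cup_{W \in \mathscr{W}} W|$. I would apply induction to $\mathscr{W}$ with ground set $X_W = \cup_{W \in \mathscr{W}} W$ to obtain a matching $T_1$, which is forced to be a bijection onto $X_W$. It then remains to match $\mathscr{S} \setminus \mathscr{W}$ into $X \setminus X_W$ using the residual sets $S \setminus X_W$. Hall's condition for this residual instance follows by taking any $\mathscr{U} \subseteq \mathscr{S} \setminus \mathscr{W}$, applying the original Hall condition to $\mathscr{W} \cup \mathscr{U}$, and subtracting the identity $|\mathscr{W}| = |X_W|$; induction then produces $T_2$, and $T = T_1 \cup T_2$ is the desired injection.

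The main obstacle is the residual Hall-verification in each case: one must be careful that removing elements from the ground set (Case A) or carving out a tight block (Case B) does not destroy Hall's inequality on some as-yet-unexamined subcollection. The slack hypothesis in Case A and the additivity trick $|\mathscr{W} \cup \mathscr{U}| = |\mathscr{W}| + |\mathscr{U}|$ in Case B are precisely what make this bookkeeping go through, and together they exhaust all possibilities for the inductive step.
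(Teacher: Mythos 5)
Your argument is correct: the converse is the trivial counting observation you give, and your forward direction is the standard Halmos--Vaughan strong induction, with the slack/tight case split and the residual Hall-verifications carried out properly (in the tight case the key identity $|\mathscr{W} \cup \mathscr{U}| \leq |X_W| + |\cup_{U \in \mathscr{U}}(U \setminus X_W)|$ combined with $|\mathscr{W}| = |X_W|$ is exactly what is needed, and in the slack case the loss of at most one element of the ground set is absorbed by the strict inequality). Note, however, that the paper does not prove this theorem at all: it is quoted as a classical result with a citation to Hall (1935), accompanied only by the folklore girls-and-boys interpretation and the remark that the associated matching problem is solvable in polynomial time via network flow (Edmonds--Karp). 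So there is no in-paper proof to compare against; your proof is a self-contained, elementary, purely combinatorial argument, whereas the paper's implicit justification is algorithmic (max-flow/min-cut), which buys a polynomial-time procedure but is not needed for the existence statement itself. Two minor points worth making explicit if you write this up: the paper's $\mathscr{W} \subset \mathscr{S}$ must be read as including $\mathscr{W} = \mathscr{S}$ (otherwise the base case and the nonemptiness of each $S$ do not follow), and since $\mathscr{S}$ is a collection ``possibly with repetition'' you should treat it as an indexed family so that the induction on $|\mathscr{S}|$ and the set differences $\mathscr{S} \setminus \{S\}$, $\mathscr{S} \setminus \mathscr{W}$ are unambiguous; neither point affects the substance of your argument.
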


The folklore way to think about the Marriage Theorem is that you are given a set $G$ of girls and a set $B$ of boys. Each girl identifies a set of boys that she would willingly marry (some girls possibly having the same preferences). Suppose that each boy is willing to marry any girl that is willing to marry him, or no one at all. Then as long as for any subset, $S$, of girls, the size of the union of the set of boys they would like to marry is as large as $|S|$, then there is a pairing of girls to boys satisfying all of the girls' (and hence everyone's) wishes, such that each girl is paired with a different boy. The converse is obvious; for a pairing of girls to boys to exist there cannot be a subset of girls, the union of whose acceptable pairings is smaller than the number of girls in the subset.
\begin{figure}[h]
\centerline{\scalebox{0.20}{\includegraphics{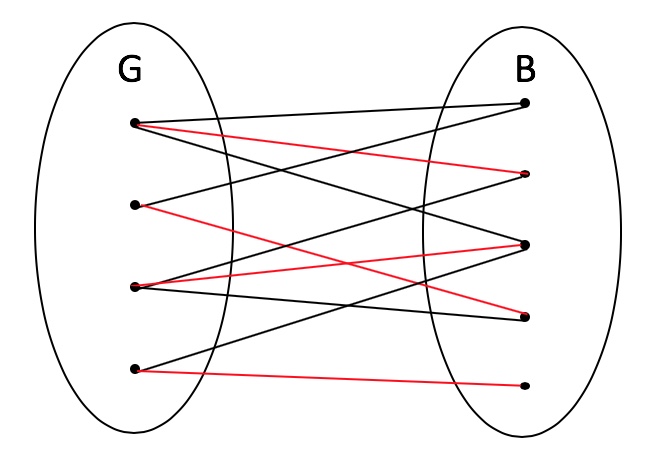}}}
\caption{The folklore view of the Marriage Theorem. The matching in red is one that makes everyone happy.}
\end{figure}

An important thing to note about the Marriage Theorem is that it gives us a cheap ``no certificate,'' thereby establishing that the problem underlying the theorem is in the class co-NP. We will refer to the problem associated with the theorem as the Classical Marriage Problem or CMP, though it is often referred to simply as Bipartite Matching.  Since Bipartite Matching can be viewed as a network flow problem, it can be solved in polynomial time using the Edmonds-Karp implementation \cite{EDMONDS:1972} of the Ford-Fulkerson method \cite{CORMEN:1999}. 

The following generalization of the Classical Marriage Problem is one that naturally presents itself in the context of LP-rounding. In the CMP each girl had a list of boys that she would be willing to marry. In the Fractional Marriage Problem (FMP), each girl can specify not just individual boys, but fractions of boys, as long as the fractions sum up to unity. In other words Susan may be willing to accept either Sam or \{0.5 Harry, 0.5 Frank\} or \{0.25 Joe, 0.25 George, 0.5 Elvis\}.  If each girl specifies such a set of tradeoffs, is it easy to tell if a given set of boys can satisfy at least one tradeoff from each girl? A Hall-like theorem for the FMP would be of substantial help in many LP-rounding problems.

A solution to the FMP is the identification of a collection of boys satisfying the requirements and, moreover, a specification of which tradeoff of each girl is satisfied. \textbf{Henceforth we will assume that the set of boys is just the collection of boys mentioned at least in part by some girl.} The question then is whether this list of boys is sufficient to satisfy all the girls' requirements. The FMP is obviously a generalization of the CMP in that any CMP instance is an FMP instance\footnote{Strictly speaking the CMP requires that each girl be paired with only one boy, while in the FMP we allow for the possibility that a set of boys may satisfy several of the requirements of some of the girls. However, given an assignment of boys to girls in the CMP such that a set of girls are assigned multiple boys, for each such girl we can pick an arbitrary satisfying boy and drop the others to obtain a unique pairing. Hence there is no loss of generality in allowing CMPs that allow for multiple pairings.}.

\medskip

The remainder of this paper is structured as follows: We first establish the NP-Completeness of the FMP, thus dashing our hopes that there is a Hall-like theorem for the FMP unless NP = co-NP. We then establish some normal form reductions for the FMP that are similar to the CNF and 3-CNF reductions that are proved for Boolean formulas. We next show that the CMP maps to SAT yielding a new polynomial-time solvable SAT variant different from the classically known polynomial-time solvable SAT variants 2-SAT, Horn-SAT and XOR-SAT. We call this new family of poly-time solvable SAT instances CMP-SAT. For this family to be a useful one we need to be able to readily identify CMP-SAT instances. We address this issue from two angles: first recognizing CMP-SAT instances from within SAT and second, mapping back to the FMP, recognizing them from within the FMP. For recognition within the FMP we develop something that we call Fragment Logic and show how it works in practice. We provide a slight generalization of the CMP that we call the Symmetrical Marriage Problem that in turn leads to a slightly larger class of poly-time solvable SAT instances and we discuss recognizing instances of this family. We finish up by proving a simple theorem that establishes a limit on the power of Fragment Logic and then conclude with a wide range of suggested additional problems.

\section{NP-Completeness of the FMP}

Henceforth, let us refer to the set of tradeoffs or \textbf{tradeoff set} for a given $g \in G$ by $\mathscr{T}_g$. A sample element $t \in \mathscr{T}_g$ with  $t = \{p_j b_j : b_j \in B,~0 \leq p_j \leq 1,~\sum p_j = 1\}$ will be referred to as a \textbf{tradeoff}. A Fractional Marriage Problem or Fractional Marriage Problem instance will be designated by a triple $(G, B, \{\mathscr{T}_g\}_{g \in G})$.

\begin{theorem} The FMP is NP-Complete \end{theorem}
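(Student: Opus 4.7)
The plan is the standard two-step argument for NP-completeness. For membership in NP, I would use the tradeoff choices themselves as the certificate: for each girl $g$ record which $t_g \in \mathscr{T}_g$ she uses, and then in polynomial time sum the fractional demands $\sum_{g}(t_g)_b$ placed on each boy $b$ and check that none exceeds $1$.

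For NP-hardness I would reduce from SAT. Given a CNF formula $\phi$ with variables $x_1,\ldots,x_n$ and clauses $C_1,\ldots,C_m$, let $p_i$ (resp.\ $n_i$) denote the number of clauses containing the literal $x_i$ (resp.\ $\neg x_i$). The FMP instance I construct has: two ``truth boys'' $T_i, F_i$ per variable; a dummy boy $d_{j,k}$ for each literal occurrence (the $k$-th literal of clause $C_j$), whose only role is to let a tradeoff's fractions sum to $1$; a variable girl $g_i^V$ with the two tradeoffs $\{1\cdot T_i\}$ and $\{1\cdot F_i\}$; and a clause girl $g_j^C$ with one tradeoff per literal of $C_j$ --- if $\ell_{j,k}=x_i$ the tradeoff is $\{(1/p_i)\,F_i,\,(1-1/p_i)\,d_{j,k}\}$, and if $\ell_{j,k}=\neg x_i$ it is $\{(1/n_i)\,T_i,\,(1-1/n_i)\,d_{j,k}\}$. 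The intuition is that $g_i^V$ fully blocks exactly one of $T_i,F_i$, encoding a truth value for $x_i$, and a clause girl can only activate the tradeoff of a literal whose required fractional sliver of the matching truth boy has not been blocked.

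For correctness, the forward direction takes any satisfying assignment, sets each $g_i^V$ to pick $T_i$ when $x_i$ is true and $F_i$ otherwise, and selects for each clause girl the tradeoff of any literal the assignment makes true. No boy is overloaded: each $F_i$ receives either $1$ from the variable girl (when $x_i$ is false) or at most $p_i\cdot (1/p_i)=1$ from the clause girls pointing at the positive literal $x_i$ (when $x_i$ is true), and symmetrically for $T_i$ and $n_i$; each dummy $d_{j,k}$ is asked for a fraction strictly less than $1$ by at most one girl. In the reverse direction, read off the truth value of $x_i$ from the choice of $g_i^V$; each clause girl must pick some tradeoff, that tradeoff demands a nonzero fraction of some $T_i$ or $F_i$, and that truth boy would be oversubscribed had the variable girl chosen the opposite side, so the literal pointed to by the clause girl must be true under the read-off assignment.

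The main obstacle I expect is calibrating the fractional weights so that (a)~a single full-unit allocation by the variable girl strictly vetoes every fractional clause demand on the same truth boy, while (b)~all $p_i$ (or $n_i$) clauses that could simultaneously point at an unblocked truth boy can each claim their sliver without exceeding its unit capacity. The choice $1/p_i, 1/n_i$ is exactly what makes~(b) tight and~(a) automatic; anything larger would break~(b) and anything smaller would break~(a). With the gadget calibrated this way, the rest is routine bookkeeping, and since the construction has size polynomial in $|\phi|$ it yields NP-hardness, which together with membership in NP gives NP-completeness.
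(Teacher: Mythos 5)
Your reduction is correct, and it reduces from the same source problem (CNF-SAT) as the paper, but the gadget is genuinely different. The paper uses no variable gadget at all: each clause becomes a tradeoff set, each positive occurrence of a variable $B$ (with $M=\max(k,\ell)$ over its $k$ positive and $\ell$ negative occurrences) becomes a tradeoff $\{\frac{1}{M+1}b,\frac{M}{M+1}b'_{\alpha}\}$ with a fresh auxiliary $b'_{\alpha}$, and every occurrence of $\neg B$ becomes the single tradeoff $\{\frac{1}{M}b'_{\beta_1},\ldots,\frac{1}{M}b'_{\beta_M}\}$ over all those auxiliaries. Mutual exclusion is then enforced \emph{softly} by arithmetic on the shared auxiliaries: using the negative tradeoff leaves only $\frac{M-1}{M}<\frac{M}{M+1}$ of each $b'_{\alpha}$, so no positive occurrence can be served, and vice versa, while all $k\le M$ positive (or all $\ell\le M$ negative) occurrences can coexist. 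You instead introduce explicit variable girls with the integral tradeoffs $\{T_i\}$ and $\{F_i\}$, and use the full unit of capacity consumed by the variable girl as a hard veto on the opposite polarity's slivers; your calibration $1/p_i$, $1/n_i$ is exactly right, and both directions of your correctness argument go through. Your construction is the more conventional variable-gadget-plus-clause-gadget design and is arguably easier to verify at a glance; the paper's buys a tighter correspondence (one tradeoff set per clause, one tradeoff per literal, no extra girls), which it later reuses when mapping CMP-SAT instances back into the FMP in its Fragment Logic development. Two trivial housekeeping points for your write-up: when $p_i=1$ the dummy fragment gets weight $0$ (harmless under the paper's definition, which allows $p_j\ge 0$, but worth either dropping the dummy or remarking on), and the tradeoffs dividing by $p_i$ or $n_i$ should only be instantiated when the corresponding count is nonzero, which happens automatically since they are created per literal occurrence.
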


\begin{proof}
The proof is by reduction from Boolean Satisfiability (SAT). We will consider an arbitrary Boolean formula in conjunctive normal form (CNF) and show how to translate the associated satisfiability problem into an equivalent fractional marriage problem. 

Let us start by considering the simple case where each clause consists entirely of positive literals, an easily satisfiable problem.  
\begin{equation*} \label{eqn:all_positive_literals}
	\psi = (B_1 \vee B_2 \vee B_3) \land (B_1 \vee B_3) \land (B_2 \vee B_3 \vee B_4).
\end{equation*}

We create a corresponding fractional marriage problem $(G, B, \{\mathscr{T}_g\}_{g \in G})$, also readily solvable, incorporating elements $b_1,...b_4$, loosely taking the place of $B_1,...,B_4$, along with a set of auxiliary elements $b'_1,...,b'_7$, as follows:
\begin{flalign*} \label{eqn:all_positive_tradeoffs}
\mathscr{T}_{g_1} =  &~ \{\{\frac{1}{2}b_1, \frac{1}{2}b'_1\}, \{\frac{1}{2}b_2, \frac{1}{2}b'_2\}, \{\frac{1}{3}b_3, \frac{2}{3}b'_3\}\} \notag \\
\mathscr{T}_{g_2} = &~ \{\{\frac{1}{2}b_1, \frac{1}{2}b'_4\}, \{\frac{1}{3}b_3, \frac{2}{3}b'_5\}\}  \\
\mathscr{T}_{g_3} = &~ \{\{\frac{1}{2}b_2, \frac{1}{2}b'_6\}, \{\frac{1}{3}b_3, \frac{2}{3}b'_7\}, \{b_4\}\} \notag
\end{flalign*}
where $G = \{g_1, g_2, g_3\}$ and $B = \{b_1,...,b_4, b'_1,...,b'_7\}$.

In the  translation we replace each clause in $\psi$ with a tradeoff set and each literal with a tradeoff.  Note that for each of the $k_i$ appearances of the variable $B_i$ we have a tradeoff of the form $\{\frac{1}{k_i}b_i, \frac{k_i - 1}{k_i}b'_\alpha\}$, each time with a different $b'_\alpha$. The elements $b_1, b'_1, b'_4$ in the fractional marriage problem formulation collectively play the role of the variable $B_1$ in the SAT problem. Similarly $b_2, b'_2, b'_6$ in the FMP play the role of $B_2$ in the SAT problem, and so on. Note that the elements of $B$ are sufficient to satisfy all the tradeoffs in all the tradeoff sets $\{\mathscr{T}_g\}_{g \in G}$. There is just enough of $b_1,...,b_4$ and more than enough of $b'_1,...,b'_7$.

Next consider the case where the clauses contain negative literals. For example, suppose we add the clause $\neg B_1 \vee \neg B_3$ to our prior formula to get the CNF formula 
\begin{equation*} \label{eqn:negation_clause}
	\psi' = (B_1 \vee B_2 \vee B_3) \land (B_1 \vee B_3) \land (B_2 \vee B_3 \vee B_4) \land (\neg B_1 \vee \neg B_3).
\end{equation*} 
Note that although we chose tradeoffs of the form $\{\frac{1}{k_i}b_i, \frac{k_i - 1}{k_i}b'_\alpha\}$ when we had only positive literals, we actually had some additional flexibility, that we will use to accommodate the negated literals. Specifically, we actually could have chosen tradeoffs of the form $\{\frac{1}{k'_i}b_i, \frac{k'_i - 1}{k'_i}b'_\alpha\}$ for any $k'_i \geq k_i$. With the additional clause added in the formula above we need extra flexibility to accommodate both $\neg B_3$ and $\neg B_1$.  To accommodate $\neg B_1$ we can choose $k'_1 = k_1 + 1$ and so make the following modifications to the first tradeoff in each of $\mathscr{T}_{g_1}$ and $\mathscr{T}_{g_2}$:
\begin{flalign*}
  \mathscr{T}_{g_1}:  &~ \{\frac{1}{2}b_1, \frac{1}{2}b'_1\} \Rightarrow \{\frac{1}{3}b_1, \frac{2}{3}b'_1\} \notag \\
  \mathscr{T}_{g_2}:  &~ \{\frac{1}{2}b_1, \frac{1}{2}b'_4\} \Rightarrow \{\frac{1}{3}b_1, \frac{2}{3}b'_4\}.
\end{flalign*}

With these changes, $\neg B_1$ can be taken to be $\{\frac{1}{2}b'_1, \frac{1}{2}b'_4\}$, with the idea that if this tradeoff is accommodated using $b'_1$ and $b'_4$ then none of the tradeoffs featuring $b_1$ can be accommodated since there will not be enough left of either $b'_1$ or $b'_4$.

For $\neg B_3$ we make the analogous modification to $k_3$ that we previously made to $k_1$, namely, $k'_3 = k_3 + 1$, yielding the following modifications to the respective tradeoffs involving $B_3$:
\begin{flalign*}
  \mathscr{T}_{g_1}:  &~ \{\frac{1}{3}b_3, \frac{2}{3}b'_3\} \Rightarrow \{\frac{1}{4}b_3, \frac{3}{4}b'_3\} \notag\\
  \mathscr{T}_{g_2}:  &~ \{\frac{1}{3}b_3, \frac{2}{3}b'_5\} \Rightarrow \{\frac{1}{4}b_3, \frac{3}{4}b'_5\} \\
  \mathscr{T}_{g_3}:  &~ \{\frac{1}{3}b_3, \frac{2}{3}b'_7\} \Rightarrow \{\frac{1}{4}b_3, \frac{3}{4}b'_7\}. \notag
\end{flalign*}
With these changes to the tradeoffs, we take $\neg B_3$ to be $\{\frac{1}{3}b'_3, \frac{1}{3}b'_5, \frac{1}{3}b'_7\}$, and so get the following Fractional Marriage Problem that is equivalent to the satisfiability problem for $\psi'$:
\begin{flalign*}
	\mathscr{T}_{g_1} = &~ \{\{\frac{1}{3}b_1, \frac{2}{3}b'_1\}, \{\frac{1}{2}b_2, \frac{1}{2}b'_2\}, \{\frac{1}{4}b_3, \frac{3}{4}b'_3\}\}  \\
	\mathscr{T}_{g_2} = &~ \{\{\frac{1}{3}b_1, \frac{2}{3}b'_4\}, \{\frac{1}{4}b_3, \frac{3}{4}b'_5\}\}  \\
	\mathscr{T}_{g_3} = &~ \{\{\frac{1}{2}b_2, \frac{1}{2}b'_6\}, \{\frac{1}{4}b_3, \frac{3}{4}b'_7\}, \{b_4\}\}  \\
	\mathscr{T}_{g_4} = &~ \{\{\frac{1}{2}b'_1, \frac{1}{2}b'_4\}, \{\frac{1}{3}b'_3, \frac{1}{3}b'_5, \frac{1}{3}b'_7\}\}. 
\end{flalign*}

In general we may have $k$ positive instances of a variable and $\ell$ negative instances of the same variable appearing across $k + \ell$ of a total of $N$ clauses. 
If we denote a given variable by $B$ and set $M = \max(k, \ell)$ then a positive instance of $B$ will be replaced by the tradeoff
\begin{equation*}
	\{\frac{1}{M+1}b, \frac{M}{M+1}b'_{\alpha}\},
\end{equation*}
where $b'_{\alpha}$ denotes the next not-yet utilized auxiliary element. We will employ a total of $M$ auxiliary elements across all clauses in which the variable $B$ appears. We replace all instances of $\neg B$ by the tradeoff 
\begin{equation*}
	\{\frac{1}{M}b'_{\beta_1},..., \frac{1}{M}b'_{\beta_M}\},
\end{equation*}
where the elements $b'_{\beta_i}$ include each of the $k$ $b'_{\alpha}$s from the positive literals plus $\ell-k$ additional not-yet used ones if $\ell > k$.

Arguing as we have already, we find that for any given SAT formula in CNF we can produce an equivalent FMP instance in the sense that the SAT formula is satisfiable iff the FMP is satisfiable. It follows that the FMP is NP-hard. Since we can plainly check the validity of a purported FMP solution in polynomial time, the FMP is NP-Complete.
\end{proof}

\medskip

Alas it follows that there is likely no analog of Hall's Marriage Theorem for the FMP since such a theorem would provide a ``no certificate'' for FMP instances, meaning that the FMP would be both NP-Complete and in co-NP, which is not possible unless NP = co-NP.

\section{Normal Form Theorems for the FMP}

Just like an arbitrary SAT formula can be reduced to Conjunctive Normal Form (CNF), and further with at most three literals per clause, we can do something similar for FMP instances. Proof of these theorems are contained in Appendix \ref{app:normal_form_proofs}.

\begin{theorem} \label{thm:fmp_1st_normal_form} \textbf{First Normal Form Reduction} 
Every Fractional Marriage Problem $\mathscr{F} = (G, B, \{\mathscr{T}_g\}_{g \in G})$ has an equivalent formulation $\mathscr{F}' = (G', B', \{\mathscr{T}_g\}_{g \in G'})$ with at most three tradeoffs per tradeoff set $\{\mathscr{T}_g\}_{g \in G'}$. In other words, $\mathscr{F}$ is solvable by $B$ if and only if $\mathscr{F}'$ is solvable by $B'$. 

Moreover, at most one tradeoff in any tradeoff set need be fractional, and within such tradeoff sets the fractional tradeoffs can be constrained to appear in tradeoff sets all of which comprise just two tradeoffs. 

The reduction from $\mathscr{F}$ to $\mathscr{F}'$ can be performed in polynomial time.
\end{theorem}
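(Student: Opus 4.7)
The plan is to mimic the standard reduction of an arbitrary CNF formula to 3-CNF, but performed directly in the FMP framework, and then follow it with a second ``fractional-squeeze'' pass to handle the extra structural conditions on fractional tradeoffs.

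Stage 1 (length reduction). For any tradeoff set $\mathscr{T}_g = \{t_1, \ldots, t_k\}$ with $k > 3$, I would introduce $k-3$ fresh auxiliary boys $u_1, \ldots, u_{k-3}$, each of unit capacity as every boy has, together with $k-3$ fresh auxiliary girls, and replace $\mathscr{T}_g$ by the chain of tradeoff sets
\[
\{t_1, t_2, \{u_1\}\},\ \{\{u_1\}, t_3, \{u_2\}\},\ \ldots,\ \{\{u_{k-4}\}, t_{k-2}, \{u_{k-3}\}\},\ \{\{u_{k-3}\}, t_{k-1}, t_k\}.
\]
Each new tradeoff set has exactly three tradeoffs, and every single-element tradeoff $\{u_i\}$ is integral. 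Because $u_i$ appears only in the two consecutive sets of the chain on either side of it, its unit capacity forces at most one of those two sets to be satisfied via $u_i$. Correctness mirrors the familiar CNF-to-3CNF argument: if some $t_j$ from the original tradeoff set is satisfiable, I satisfy the new set containing $t_j$ with $t_j$ itself and satisfy every other new set in the chain using the adjacent $\{u_i\}$ pointing toward $t_j$; conversely, if no $t_j$ is satisfiable, every new set is forced to use an $\{u_i\}$ tradeoff, which over-saturates some $u_i$ and is infeasible.

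Stage 2 (fractional squeeze). After Stage 1 every tradeoff set has size at most 3, but it may still contain multiple fractional tradeoffs. To pull an offending fractional tradeoff $t_c$ out of a size-3 set $\{t_a, t_b, t_c\}$, I would introduce one fresh auxiliary boy $w$ of unit capacity and one fresh auxiliary girl, and replace the set by
\[
\{t_a, t_b, \{w\}\}\ \text{ and }\ \{\{w\}, t_c\}.
\]
The same unit-capacity argument as in Stage 1 shows that this pair of tradeoff sets is jointly satisfiable exactly when at least one of $t_a, t_b, t_c$ is, while the newly inserted $\{w\}$ tradeoffs are integral. Iterating this step on every 3-set until no 3-set contains any fractional tradeoff yields the exact structural constraint claimed: every surviving 3-set consists entirely of integral tradeoffs, and every fractional tradeoff lives in a size-2 set opposite an integral partner.

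The step I expect to require the most care is the correctness argument for the Stage 1 chain, in both directions, and in particular the verification that the fresh auxiliary boys create no unintended interaction with the rest of the instance $\mathscr{F}$. Since every auxiliary boy and girl is fresh and appears only in the two tradeoff sets in which it is explicitly placed, no such interaction can arise, but this is the part of the argument most easily mishandled. The polynomial-time bound is then routine: Stage 1 replaces a size-$k$ tradeoff set with $O(k)$ new girls, boys, and tradeoff sets; Stage 2 adds a constant amount of new structure per fractional tradeoff pulled out of a 3-set; so the total blow-up from $\mathscr{F}$ to $\mathscr{F}'$ is linear in the input size, and each individual replacement is constant time to write down.
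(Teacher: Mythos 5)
Your argument is sound and reaches the same conclusion, but by a genuinely different decomposition than the paper's. For the length reduction the paper splits a size-$N$ tradeoff set \emph{in half}, into two sets of roughly $N/2+1$ tradeoffs sharing a single fresh unit-capacity element, and recurses; you instead build the linear chain familiar from the CNF-to-3CNF reduction, with $k-3$ fresh elements each shared by two consecutive links, and your pigeonhole argument ($k-2$ sets but only $k-3$ auxiliary boys) correctly replaces the bridge-variable argument. For isolating fractional tradeoffs the paper uses a star construction: it places \emph{every} tradeoff $t_i$ of a set into its own pair $\{t_i,\{b_{M+i}\}\}$ and adds one all-integral hub set $\{\{b_{M+1}\},\dots,\{b_{M+N}\}\}$ (then splits the hub), whereas you pull fractional tradeoffs out of size-3 sets one at a time. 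Your route is arguably closer to the SAT intuition the paper is drawing on and gives an explicitly linear blow-up; the paper's route has the advantage that its second construction uniformly lands every fractional tradeoff in a two-element set in one shot, for any starting size $N\geq 1$.

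There is one small but real omission in your Stage 2: you only process size-3 sets. A tradeoff set of size two containing \emph{two} fractional tradeoffs, or a singleton set $\{t_c\}$ with $t_c$ fractional, survives both of your stages untouched, and each violates the theorem's structural claim (at most one fractional tradeoff per set, and every fractional tradeoff living in a set of exactly two tradeoffs). The same squeeze handles the first case ($\{t_a,t_b\}\mapsto\{t_a,\{w\}\},\{\{w\},t_b\}$), but the singleton case needs the paper's forcing trick, i.e.\ replace $\{t_c\}$ by $\{t_c,\{w\}\}$ together with the singleton $\{\{w\}\}$, which consumes $w$ and forces $t_c$; simply padding $\{t_c\}$ with a fresh $\{w\}$ alone would change satisfiability. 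With those two cases added, your proof establishes the full statement.
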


\begin{definition} We shall refer to an element $pb$ with $0 \leq p \leq 1$ that is part of a tradeoff $t$, which in turn is an element of a tradeoff set $\mathscr{T}_g$, either as a \textbf{fractional requirement} or as a \textbf{fragment}. Given a fractional marriage problem $(G, B, \{\mathscr{T}_g\}_{g \in G})$, we let $F$ denote the set of all fragments (fractional requirements) appearing across all tradeoffs and tradeoff sets.
\end{definition}

\begin{definition} Given an FMP $\mathscr{F} = (G, B, \{\mathscr{T}_g\}_{g \in G})$, an element $b \in B$ is said to be \textbf{free}, or \textbf{free in $\mathscr{F}$}, if $\sum_{pb \in F} p \leq 1$. 
Say that a fragment $pb$ is \textbf{free} if the associated element $b$ is free.
\end{definition}

\begin{theorem} \textbf{Second Normal Form Reduction} \label{thm:fmp_2nd_normal_form}
In the conclusion of the First Normal Form Theorem, we may further assume that each fractional tradeoff has just two elements, in other words each fractional tradeoff is of the form $t = \{p b_i, (1-p) b_j\}$ and $0 < p < 1$. In addition, the elements appearing in the second fractional part, the elements $b_j$, can all be taken to be free.

This further reduction can also be performed in polynomial time.\\
\end{theorem}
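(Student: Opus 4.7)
The approach I would take is to start from the output of Theorem~\ref{thm:fmp_1st_normal_form}, so that every remaining multi-fragment fractional tradeoff $t = \{p_1 b_1, \ldots, p_k b_k\}$ (with $k \geq 3$) already sits alone with a singleton $\{b_0\}$ in a 2-tradeoff set $\mathscr{T}_g$. I would then process these multi-fragment tradeoffs one at a time, replacing each with a cascade of binary fractional tradeoffs carried by a chain of auxiliary girls and auxiliary elements, and afterwards verify that the reduction preserves equisatisfiability and the First Normal Form structure.

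Concretely, for each such tradeoff I would introduce fresh elements $c_1, c_2, \ldots, c_{k-1}$ and fresh auxiliary girls $g_1, \ldots, g_{k-1}$. The tradeoff set for the original $g$ would be rewritten as $\{\{p_1 b_1,(1-p_1)c_1\}, \{b_0\}\}$, and each $g_i$ would receive a 2-tradeoff set whose fractional option is $\{p_{i+1}b_{i+1}, (1-p_{i+1})c_{i+1}\}$ (with a final fresh free element $c_k$ closing the chain) and whose other option is the singleton $\{c_i\}$. The idea is a cascade: if $g$ picks her new fractional, she uses $(1-p_1)$ of $c_1$, leaving only $p_1 < 1$ capacity, so $g_1$ cannot take her singleton and is forced into her binary fractional, consuming $p_2$ of $b_2$ and $(1-p_2)$ of $c_2$; the same argument then forces $g_2$, and so on down the chain, producing total consumption exactly $\sum_{i=1}^k p_i b_i$. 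If instead $g$ picks $\{b_0\}$, every $c_i$ is untouched, each $g_i$ is free to take her singleton, and no original $b_i$ is consumed by the gadget.

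The main obstacle I expect is the freeness condition on the second fragments. Under the naive chain above, the element $c_i$ accumulates demand $(1-p_i) + 1 = 2 - p_i > 1$ (from the predecessor's fractional plus $g_i$'s singleton) and is therefore \emph{not} free, so the ``second element free'' clause fails. The plan to repair this is to split each singleton $\{c_i\}$ into a small gadget whose net effect is still ``block the escape exactly when $c_i$ has been partially consumed by the predecessor, and allow it otherwise,'' but whose contribution to $c_i$'s demand is only $p_i$ rather than $1$. For instance, one can introduce a forced helper girl $h_i$ that always consumes the leftover $p_i$ of $c_i$ together with a fresh free padding element $e_i$, and reroute $g_i$'s escape to a fresh element whose availability is in turn gated by $h_i$; this drives $c_i$'s total demand down to exactly $1$, making it free, while leaving the forcing cascade intact. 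Designing the helper so that (i) the cascade still propagates correctly in both scenarios, (ii) every new tradeoff set still has at most one fractional tradeoff as required by Theorem~\ref{thm:fmp_1st_normal_form}, and (iii) the cleanup does not reintroduce a multi-fragment fractional tradeoff, is the delicate part of the proof; I would likely iterate on the exact gadget shape and then reapply the First Normal Form reduction locally if needed.

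Once the gadget is nailed down, I would finish with the routine pieces: a direct two-direction case analysis (forward, exhibit the cascade choices for each original satisfying assignment; backward, read off an original assignment from the fact that whenever $g$ picks her new fractional the cascade is forced) to establish equisatisfiability, and a count showing that each $k$-fragment tradeoff contributes only $O(k)$ new girls and $O(k)$ new elements, so the overall reduction has size linear in the input FMP and runs in polynomial time.
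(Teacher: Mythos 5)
Your cascade construction is a genuinely different route from the paper's, but it has a gap that you yourself identify and do not close: the freeness of the second fragments. In your chain, each $c_i$ appears as the second fragment $(1-p_i)c_i$ of one fractional tradeoff \emph{and} as the escape singleton $\{c_i\}$ of the next auxiliary girl, so its total demand is $(2-p_i) > 1$ and it is not free. Worse, this is not an incidental flaw of your particular chain but is forced by the serial design: for the escape to be blocked exactly when the predecessor has consumed $(1-p_i)$ of $c_i$, the escape must demand strictly more than $p_i$ of $c_i$, so the total demand on $c_i$ necessarily exceeds $1$. Any repair must therefore change the escape mechanism itself, and the helper-girl gadget you sketch (``reroute $g_i$'s escape to a fresh element whose availability is gated by $h_i$'') is not specified concretely enough to verify that it forces the cascade in one scenario, releases it in the other, keeps every tradeoff set in First Normal Form, and does not simply relocate the non-free element. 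Since you explicitly defer this to future iteration, the proof as written does not establish the second clause of the theorem.

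The paper avoids the problem by parallelizing rather than serializing. Starting from a First Normal Form set $\mathscr{T} = \{\{p_1 b_1,\dots,p_N b_N\},\{b\}\}$, it replaces it by $N$ independent tradeoff sets $\mathscr{T}_i = \{\{p_i b_i,(1-p_i)b'_i\},\{\frac{1}{N}b,\frac{N-1}{N}b''_i\}\}$, where $b'_i, b''_i$ are fresh elements each used exactly once (hence free), and the single escape element $b$ is shared \emph{fractionally} in pieces of size $\frac{1}{N}$ across all $N$ sets; choosing all escapes consumes exactly one unit of $b$, matching the original singleton, while choosing all fractional options reproduces $\{p_1b_1,\dots,p_Nb_N\}$. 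Each new set momentarily has two fractional tradeoffs, which is fixed by one more application of the First Normal Form splitting. The lesson is that distributing the escape over the shared element $b$, rather than threading it through a chain of blocking elements, is what makes every second fragment free for free. If you want to salvage your approach, you would need to prove an analogous statement for your helper gadgets with the same level of explicitness; as it stands, the paper's construction is both simpler and complete.
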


The normal form reductions are used in the proof of Theorem \ref{thm:fragment_reduction_is_polynomial}.

\section{CMP $\hookrightarrow$ SAT: A New Family of Poly-Time Solvable SAT Instances}

Although SAT is in general NP-Complete there are three famous SAT sub-families that are solvable in polynomial time: 2-SAT, Horn-SAT and XOR-SAT \cite{WIKI-SATA:2019}. In this section we show that the CMP maps to SAT to give a family of polynomial-time solvable SAT instances that is different from these three.

2-SAT is the collection of Boolean formulas that can be written in CNF with at most two disjuncts per clause.  Note that the disjunct $A \vee B$ is equivalent to $\neg A \rightarrow B$ and $\neg B \rightarrow A$. 
Given a 2-SAT formula $\phi$ we may then form a directed graph $G$ where the set of vertices is the set of variables in $\phi$, together with their negations. For every clause of the form $A \vee B$ we add a directed edge from $\neg A$ to $B$ and from $\neg B$ to $A$. Given such a graph, it is not difficult to see that $\phi$ is satisfiable iff $G$ has no directed cycle containing both a variable and its negation, something that can readily be checked in polynomial time.

Horn-SAT is the collection of Boolean formulas that can be written in CNF with at most one positive literal per clause. There is a simple polynomial-time recursive algorithm for determining if a Horn-SAT formula $\phi$ is satisfiable, and if so, what a satisfying assignment is \cite{WIKI-Horn:2019, DOWLING:1984}. Since there is complete duality between a variable and its negation, if we have a formula $\phi$ and can flip the polarity of a subset of the variables to get a Horn formula we can then just apply the same procedure. It turns out that determining whether a formula is Horn-rewritable in this sense is actually solvable in linear time \cite{HEBRARD:1994}.

XOR-SAT is the collection of Boolean formulas that can be written in CNF-like form but with XOR ($\oplus$) replacing OR ($\vee$) in all clauses. For example: $(A \oplus B \oplus C) \land (\neg A \oplus C) \land (B \oplus \neg C \oplus \neg D)$.
The satisfaction of this formula is identical to the satisfaction of the following system of linear equations modulo $2$:
\begin{flalign*}
	a + b + c = 1 &\\
	(1-a) + c = 1 & \leftrightarrow a + c = 0 \\
	b + (1-c) + (1-d) = 1 & \leftrightarrow b + c + d = 1.
\end{flalign*}
Thus, satisfiability can be determined using Gaussian Elimination over $\mathbb{F}_2$, which has a run-time that is cubic in max$(n,m)$, where $n =$ number of variables, $m =$ number of equations.

\medskip

Let's return now to the Classical Marriage Problem. It is easy to encode a CMP instance as a SAT formula.  Consider the following CMP:
\begin{flalign} \label{sample_cmp}
	\mathscr{T}_{g_1} & = \{b_1, b_2, b_3\} \notag \\
	\mathscr{T}_{g_2} & = \{b_1, b_3\} \notag \\
	\mathscr{T}_{g_3} & = \{b_2, b_4\}  \\
	\mathscr{T}_{g_4} & = \{b_1, b_4\}. \notag 
\end{flalign}

For each element $b_i$ appearing in the tradeoffs we create a set of variables $\{b_{ij}\}$, where $b_{ij}$ represents the element $b_i$ in the $j$th tradeoff set.  Since $b_1$ can be used to satisfy exactly one of the tradeoffs $\mathscr{T}_{g_i}$ above, we replace $b_1$ in $\mathscr{T}_{g_1}$ by $b_{11} \land \neg b_{12} \land \neg b_{14}$, replace $b_1$ in $\mathscr{T}_{g_2}$ by $\neg b_{11} \land b_{12} \land \neg b_{14}$, and so on.

In place of the above tradeoff sets we therefore get the following sub-formulas: 
\begin{flalign} \label{intial_clauses}
	\phi_1 =~ & (b_{11} \land \neg b_{12} \land \neg b_{14}) \vee (b_{21} \land \neg b_{23}) \vee (b_{31} \land \neg b_{32}) \notag \\
	\phi_2 =~ & (\neg b_{11} \land b_{12} \land \neg b_{14}) \vee(\neg b_{31} \land b_{32}) \notag \\
	\phi_3 =~ & (\neg b_{21} \land b_{23}) \vee (b_{43} \land \neg b_{44}) \\
	\phi_4 =~ & \notag (\neg b_{11} \land \neg b_{12} \land b_{14}) \vee (\neg b_{43} \land b_{44}).
\end{flalign}
And the full SAT formula that we need to satisfy is $\phi = \phi_1 \land \phi_2 \land \phi_3 \land \phi_4$.

With some additional work we get the following 3-CNF reduction:
\begin{flalign} \label{eqn:messy_3cnf}
	\phi_1 =~ & (((b_{11} \land \neg b_{12} \land \neg b_{14}) \vee b_{21}) \land ((b_{11} \land \neg b_{12} \land \neg b_{14}) \vee \neg b_{23})) \vee (b_{31} \land \neg b_{32}) \notag \\
	=~ &(((b_{11} \vee b_{21}) \land (\neg b_{12} \vee b_{21}) \land (\neg b_{14} \vee b_{21})) \land ((b_{11} \vee \neg b_{23}) \land (\neg b_{12} \vee  \neg b_{23}) \land (\neg b_{14} \vee \neg b_{23}))) \vee (b_{31} \land \neg b_{32}) \notag \\
	=~ &((b_{11} \vee b_{21}) \land (\neg b_{12} \vee b_{21}) \land (\neg b_{14} \vee b_{21}) \land (b_{11} \vee \neg b_{23}) \land (\neg b_{12} \vee  \neg b_{23}) \land (\neg b_{14} \vee \neg b_{23})) \vee (b_{31} \land \neg b_{32}) \notag \\
	=~ &(((b_{11} \vee b_{21}) \land (\neg b_{12} \vee b_{21}) \land (\neg b_{14} \vee b_{21}) \land (b_{11} \vee \neg b_{23}) \land (\neg b_{12} \vee  \neg b_{23}) \land (\neg b_{14} \vee \neg b_{23})) \vee b_{31}) \land \notag \\
	&(((b_{11} \vee b_{21}) \land (\neg b_{12} \vee b_{21}) \land (\neg b_{14} \vee b_{21}) \land (b_{11} \vee \neg b_{23}) \land (\neg b_{12} \vee  \neg b_{23}) \notag \land (\neg b_{14} \vee \neg b_{23})) \vee \neg b_{32}) \notag \\
	=~ &(b_{11} \vee b_{21} \vee b_{31}) \land (\neg b_{12} \vee b_{21} \vee b_{31}) \land (\neg b_{14} \vee b_{21} \vee b_{31}) \land (b_{11} \vee \neg b_{23} \vee b_{31}) \land (\neg b_{12} \vee  \neg b_{23} \vee b_{31})  \notag \\
	& \land (\neg b_{14} \vee \neg b_{23} \vee b_{31}) \land (b_{11} \vee b_{21} \vee \neg b_{32}) \land (\neg b_{12} \vee b_{21} \vee \neg b_{32}) \land (\neg b_{14} \vee b_{21} \vee \neg b_{32}) \notag \\
	& \land (b_{11} \vee \neg b_{23} \vee \neg b_{32}) \land (\neg b_{12} \vee  \neg b_{23} \vee \neg b_{32}) \land (\neg b_{14} \vee \neg b_{23} \vee \neg b_{32})  \\
	\phi_2 =~ & ((\neg b_{11} \land b_{12} \land \neg b_{14}) \vee \neg b_{31}) \land ((\neg b_{11} \land b_{12} \land \neg b_{14}) \vee b_{32}) \notag \\
	=~ & (\neg b_{11} \vee \neg b_{31}) \land (b_{12} \vee \neg b_{31}) \land (\neg b_{14} \vee \neg b_{31}) \land (\neg b_{11} \vee b_{32}) \land (b_{12} \vee b_{32}) \land (\neg b_{14} \vee b_{32}) \notag \\
	\phi_3 =~ & ((\neg b_{21} \land b_{23}) \vee b_{43}) \land ((\neg b_{21} \land b_{23}) \vee \neg b_{44}) \notag \\
	=~ & (\neg b_{21} \vee b_{43}) \land (b_{23} \vee b_{43}) \land (\neg b_{21} \vee \neg b_{44}) \land (b_{23} \vee \neg b_{44}) \notag \\
	\phi_4 =~ & ((\neg b_{11} \land \neg b_{12} \land b_{14}) \vee \neg b_{43}) \land ((\neg b_{11} \land \neg b_{12} \land b_{14}) \vee b_{44}) \notag \\
	=~ & (\neg b_{11} \vee \neg b_{43}) \land (\neg b_{12} \vee \neg b_{43}) \land (b_{14} \vee \neg b_{43}) \land (\neg b_{11} \vee b_{44}) \land (\neg b_{12} \vee b_{44}) \land (b_{14} \vee b_{44}) \notag
\end{flalign}

We now show that $\phi$ is not an example of 2-SAT, Horn-SAT or XOR-SAT. On the one hand, $\phi$ evidently has clauses of length $3$ so is not an example of 2-SAT. Additionally, there are no simple reductions that would allow us to conceivably reduce clauses of length $3$ to clauses of length $2$. In other words (1) there are no variables that appear purely in positive form or purely in negative form across all clauses, since such clauses could just be removed, and (2) there are no conjuncts either of the generic form $(A \vee B) \land A$
or of the generic form $(A \vee B) \land (A \vee \neg B)$, either of which would simplify to just $A$.

On the one hand, the example we have just considered provides a CMP instance that is not a 2-SAT instance since, modulo a detail that we shall get to in a few parapgrahps, there are clauses in the 3-SAT reduction that are of length $3$. However, there is also the much simpler CMP instance $\mathscr{T}_g  = \{b_1, b_2, b_3\}$, which is equivalent to the CNF formula $\phi = b_1 \vee b_2 \vee b_3$, which cannot be represented as a conjunction of clauses of length $2$, even if we add additional variables.

Next, let us rule out Horn-SAT. Looking at the full 3-CNF reduction above for $\phi = \phi_1 \land \phi_2 \land \phi_3 \land \phi_4$, we see that there are clauses with more than one positive literal and so $\phi$ is not Horn. However, we must also show that $\phi$ is not Horn-rewritable -- an argument that we provide in Appendix \ref{app:non_horn_rewritability}.

XOR-SAT is similarly easy to rule out. $A \oplus B$ is equivalent to $(A \vee B) \land (\neg A \vee \neg B)$, while $A \oplus B \oplus C$ is equivalent to $(A \vee B \vee C) \land (\neg A \vee \neg B \vee C) \land (\neg A \vee B \vee \neg C) \land (A \vee \neg B \vee \neg C)$ and in the 3-CNF representation of $\phi$ we have no two clauses that even have the same variables. Hence:

\begin{theorem} \label{thm:no_union} The family of CMP instances, when rendered as SAT formulas, is not contained in 2-SAT $\cup$ Horn-SAT $\cup$ XOR-SAT.
\end{theorem}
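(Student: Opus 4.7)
The plan is to collect three exclusions (not 2-SAT, not Horn-SAT, not XOR-SAT) into one structured argument by exhibiting explicit CMP instances whose SAT encodings witness failure of containment in each class. The 2-SAT and XOR-SAT exclusions follow essentially from structural observations already visible in the preceding text; the Horn-SAT exclusion is where the real work lies.

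For the 2-SAT part, I would take the trivial CMP instance $\mathscr{T}_g = \{b_1, b_2, b_3\}$, whose encoding under the construction is the single 3-clause $b_1 \vee b_2 \vee b_3$. This is manifestly not a 2-CNF formula, and the author's remarks -- that no clause collapse of the form $(A \vee B) \wedge A \equiv A$ or $(A \vee B) \wedge (A \vee \neg B) \equiv A$ applies, and no variable appears with uniform polarity across the formula -- show that it does not simplify to one either. If a stronger reading is desired (namely that no 2-CNF formula even in additional auxiliary variables projects to it), I would invoke the standard median-closure characterization of 2-SAT satisfying sets, together with the observation that $\{0,1\}^3 \setminus \{(0,0,0)\}$ is not median-closed, the median of the three unit vectors being $(0,0,0)$.

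For the XOR-SAT exclusion I would use the structural fact that a $k$-XOR clause expands to exactly $2^{k-1}$ CNF clauses sharing the same variable set, with negation patterns of fixed parity. Scanning the 3-CNF reduction (\ref{eqn:messy_3cnf}) of the running example (\ref{sample_cmp}), no two distinct 3-clauses are supported on the same variable triple, so the encoding cannot be the CNF expansion of any XOR formula. The one-girl instance $b_1 \vee b_2 \vee b_3$ handles the same exclusion in isolation, since its CNF form is already a single clause while any XOR expansion on three variables produces four.

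The main obstacle is the Horn-SAT exclusion, because Horn-SAT is closed under variable polarity renaming (Horn-rewritability), so it is not enough to point at a clause with multiple positive literals such as $b_{11} \vee b_{21} \vee b_{31}$ in (\ref{eqn:messy_3cnf}). My plan is to work with the 3-CNF reduction (\ref{eqn:messy_3cnf}) of (\ref{sample_cmp}) and extract an obstruction: a small subfamily of clauses whose positive-literal incidence structure admits no $2$-coloring of the variables (no choice of which variables to flip) that leaves at most one positive literal per clause. Concretely one looks for three or four clauses that force an inconsistent system of polarity constraints on a handful of variables. This obstruction-finding is exactly the work the author defers to Appendix \ref{app:non_horn_rewritability}; I would invoke it to finish the Horn case, and combining the three exclusions then yields the theorem.
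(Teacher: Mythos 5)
Your overall route is the same as the paper's: the same running example, the same three-way case split, the one-girl instance $b_1 \vee b_2 \vee b_3$ against 2-SAT, and the observation that no two clauses of (\ref{eqn:messy_3cnf}) share a variable set against XOR-SAT. Your median-closure argument is in fact a genuine improvement on the paper, which merely asserts that $b_1 \vee b_2 \vee b_3$ ``cannot be represented as a conjunction of clauses of length 2, even if we add additional variables''; noting that median-closedness is preserved under coordinatewise projection makes that assertion precise (just be explicit that you are ruling out definability via existentially quantified auxiliaries, not mere equi-satisfiability, since every satisfiable formula is equi-satisfiable with a trivial 2-CNF formula).

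The gap is the Horn case, which you correctly identify as the crux and then do not carry out: you describe the shape of the obstruction (an inconsistent system of polarity-flip constraints) and then invoke Appendix \ref{app:non_horn_rewritability} to supply it. That appendix \emph{is} the paper's proof of non-Horn-rewritability; a blind attempt cannot cite it, and until the obstruction is exhibited the theorem is unproved. The needed chase is short and should be written out. Writing $S$ for the set of flipped variables, each clause with positive variables $P$ and negative variables $N$ imposes $\abs{(P \setminus S) \cup (N \cap S)} \leq 1$. In (\ref{eqn:messy_3cnf}) the clause $b_{11} \vee b_{21} \vee \neg b_{32}$ forces $b_{11} \in S$ or $b_{21} \in S$; the branch $b_{11} \in S$ forces $b_{32} \in S$ via $\neg b_{11} \vee b_{32}$ and then $b_{21} \in S$ via $\neg b_{12} \vee b_{21} \vee \neg b_{32}$, so $b_{21} \in S$ in either branch. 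Then $\neg b_{21} \vee b_{43}$ forces $b_{43} \in S$, $b_{14} \vee \neg b_{43}$ forces $b_{14} \in S$, and $\neg b_{14} \vee b_{32}$ forces $b_{32} \in S$; but now $\neg b_{14} \vee b_{21} \vee \neg b_{32}$ has two positive literals after renaming, a contradiction. Without this (or some general structural lemma about CMP encodings), the Horn exclusion, and hence the theorem, does not follow from what you have written.
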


For this family, which from now on we shall refer to as CMP-SAT, to form a useful polynomially-decidable SAT family we must be able to recognize instances of CMP-SAT.  We take this matter up in the next section.

\section{Recognizing CMP-SAT Instances}

It turns out that there is an alternative and useful way to represent CMP instances. Let us introduce a new Boolean operator, $\oplus_{\leq 1}(A_1,...,A_N)$ which is TRUE if and only if at most one of $A_1,...,A_N$ is TRUE.  Returning to the CMP instance that we previously considered, we have the following equivalence:
\begin{flalign} \label{eqn:cmp}
	& \mathscr{T}_{g_1} = \{b_1, b_2, b_3\} \notag \\
	& \mathscr{T}_{g_2} = \{b_1, b_3\} \notag \\
	& \mathscr{T}_{g_3} = \{b_2, b_4\}  \\
	& \mathscr{T}_{g_4} = \{b_1, b_4\}. \notag 
\end{flalign}
\begin{flalign} \label{eqn:canonical_embedding}
	& b_{11} \vee b_{21} \vee b_{31} \notag \\
	& b_{12} \vee b_{32}  \notag \\ 
	& b_{23} \vee b_{43} \notag \\
	& b_{14} \vee b_{44} \notag \\
	&\oplus_{\leq 1}(b_{11}, b_{12}, b_{14}) \\
	& \oplus_{\leq 1}(b_{21}, b_{23}) \notag \\
	& \oplus_{\leq 1}(b_{31}, b_{32}) \notag \\
	& \oplus_{\leq 1}(b_{43}, b_{44}). \notag 
\end{flalign}

Saying that at most one of a set of variables is TRUE is the same as saying that for every pair of the variables in the set, at least one variable in the pair is FALSE. Hence we have: 
\begin{flalign*}
	& \oplus_{\leq 1}(A_1,...,A_N) \leftrightarrow \bigwedge_{1 \leq i < j \leq N} (\neg A_i \vee \neg A_j) . 
\end{flalign*}
Thus, (\ref{eqn:canonical_embedding}) is equivalent to\footnote{Theorem \ref{thm:no_union} can equally well be proved using this CNF encoding.}:
\begin{flalign} \label{eqn:canonical_embedding2}
	& b_{11} \vee b_{21} \vee b_{31} \notag\\
	& b_{12} \vee b_{32} \notag\\ 
	& b_{23} \vee b_{43} \notag\\
	& b_{14} \vee b_{44} \notag\\
	& (\neg b_{11} \vee \neg b_{12}) \land (\neg b_{11} \vee \neg b_{14}) \land (\neg b_{12} \vee \neg b_{14}) \\
	& \neg b_{21} \vee \neg b_{23} \notag\\
	& \neg b_{31} \vee \neg b_{32} \notag\\
	& \neg b_{43} \vee \neg b_{44}. \notag
\end{flalign}

There is, however, a certain non-uniqueness to this encoding.  Since we are encoding four tradeoff sets, we must have at least four elements among all the $b_{ij}$ that are TRUE. It follows that the at-most-one-of relations, $\oplus_{\leq 1}$, could just as well have been encoded as ``exactly-one-of''. Thus we get a truth-table-wise equivalent formula, by adding any number of the following clauses:
\begin{flalign*} 
	& b_{11} \vee b_{12} \vee b_{14} \\
	& b_{21} \vee b_{23} \\
	& b_{31} \vee b_{32} \\
	& b_{43} \vee b_{44}.
\end{flalign*}
We shall have more to say about this matter (see, in particular, Observation \ref{obs3} and theorems \ref{thm:smp} and \ref{thm:sum_up}).

To summarize: every CMP is  a collection of OR clauses with all positive literals, and $\oplus_{\leq 1}$ statements, where a $\oplus_{\leq 1}$ statement with $K$ arguments is the disjunction of all $K \choose{2}$ pairs of the associated negated literals. Since the OR clauses with all positive literals can be of arbitrary length it is worth recalling how such disjunctions are reduced to equi-satisfiable 3-CNF formulas:
\begin{flalign*}
	& A \vee B \vee C \vee D \vee E \leftrightarrow (A \vee B \vee X) \land (\neg X \vee C \vee Y) \land (\neg Y \vee D \vee E)
\end{flalign*}
$A \vee B \vee C \vee D \vee E$ is TRUE iff there is a truth value assignment of $X, Y$ making the right hand side above TRUE. $X$ and $Y$ are sometimes called ``bridge'' variables and should not appear in any other clauses.

\smallskip

Thus, to \textbf{recognize} a CMP instance of 3-CNF SAT we must observe the following:
\begin{enumerate}[(1)]
	\item \label{item:1st_step}There is some number of $\oplus_{\leq 1}$ statements\footnote{Possibly none, which would correspond to a CMP in which each girl picks a set of boys that is disjoint from the set picked by each other girl.}, recognizable because when these statements are expanded, if we take the set of constituent variables, every two members of the set appear negated and together by themselves in a clause. No variable appears more than once across all the $\oplus_{\leq 1}$ statements.
	\item \label{item:key_step} There is some strictly positive number of clauses with all positive literals, possibly containing bridge variables. No variable appears more than once across all these clauses. There are optionally, some number of clauses, each of which contains all positive literals (again possibly containing bridge variables) the variables of which are identical to the variables appearing in some $\oplus_{\leq 1}$ statement.
	\item There are no other clauses.
	\item \label{item:generalization_step} If we strike out the optional clauses for a moment, then every variable appearing in negated form in one of the expanded $\oplus_{\leq 1}$ statements appears exactly once in positive form in a clause with all positive literals. Moreover, no two variables from the same expanded $\oplus_{\leq 1}$ statement appear in the same all-positive clause.\\
\end{enumerate}
And to \textbf{solve} such an instance:
\begin{enumerate}[(1)]
	\item Turn any all-positive clauses that include bridge variables into disjunctions of arbitrary length of all positive literals without bridge variables.
	\item Turn ${L}\choose{2}$ sets of binary negated literals into $\oplus_{\leq 1}$ statements.  Suppose there are $K$ of these $\oplus_{\leq 1}$ statements.
	\item Turn each all-positive clause $C_i$ into an integral tradeoff set $\mathscr{T}_i$ with the same named elements.
	\item Identify each $\oplus_{\leq 1}$ statement with a representative unique element name, say $B_j{:}~1 \leq j \leq K$.
	\item In the tradeoff sets created in step (3) replace each variable $V_{\alpha}$ appearing in each $\oplus_{\leq 1}$ statement with its unique representative. Keep track of all of these mappings (i.e. mappings of the form $V_{\alpha} \hookrightarrow B_j$).
	\item The collection of tradeoff sets now defines a CMP that we can solve using known methods.
	\item If there is no solution to the CMP then there is no solution to the associated SAT problem.  On the other hand, let $T()$ be a transversal to the collection $\{\mathscr{T}_i\}$ as defined in Theorem \ref{thm:hmt_classical}. In other words, $T(\mathscr{T}_i) \in \mathscr{T}_i, \forall i$, and $T(\mathscr{T}_i) \neq T(\mathscr{T}_j)$ as long as $i \neq j$. Because of the correspondence between tradeoff sets and clauses, $T$ picks a unique element of each $C_i$, and mapping backwards (if necessary) using the inverse of the recorded transformations $V_{\alpha} \hookrightarrow B_j$, we get a set of variables $\{V_{\alpha}\}$, which, when set to TRUE, coupled with setting all other non-bridge variables to FALSE (to satisfy the ``at most one of'' nature of the $\oplus_{\leq 1}$ clauses), satisfies all clauses in the associated SAT formula.
	\item These variable settings guarantee the satisfaction of all of the all-positive disjuncts and $\oplus_{\leq 1}$ clauses. For long all-positive disjuncts we still need to tweak the bridge variables. For example, in the formula
	\begin{flalign*}
		& (A \vee B \vee X) \land (\neg X \vee C \vee Y) \land (\neg Y \vee D \vee E)
	\end{flalign*}
we will be guaranteed that one of $\{A, B, C, D, E\}$ is TRUE, but still will need to set $X$ and $Y$ appropriately to guarantee the truth of each clause.
\end{enumerate}

We provide an example illustrating this recognition and solution process in Appendix \ref{app:recognizing_and_solving}.

While we have written down conditions for recognizing CMP instances as we have chosen to encode them, let us explore whether these can be generalized somewhat to either still give CMP instances, or at least give poly-time solvable SAT/FMP instances. Here are the key elements from the recognition steps (\ref{item:1st_step}) - (\ref{item:generalization_step}) that we would like to generalize:

\begin{enumerate}
\item[(\ref{item:key_step})] ....~There are optionally, some number of clauses, each of which contains all positive literals (again possibly containing bridge variables) the variables of which are identical to the variables appearing in some $\oplus_{\leq 1}$ statement.
\item[(\ref{item:generalization_step})]  If we strike out the optional clauses for a moment, then every variable appearing in negated form in one of the expanded $\oplus_{\leq 1}$ statements appears exactly once in positive form in a clause with all positive literals. Moreover, no two variables from the same expanded $\oplus_{\leq 1}$ statement appear in the same all-positive clause.
\end{enumerate}

\medskip

\begin{observ} \label{obs1} Variables appearing in the various $\oplus_{\leq 1}$ statements need not appear in any $\bigvee$ clause. \end{observ}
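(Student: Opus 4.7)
The plan is to show that such ``orphan'' variables (those appearing in $\oplus_{\leq 1}$ statements but in no disjunction) are completely redundant and can be dropped, yielding an equivalent formula that satisfies the stricter recognition criteria (\ref{item:1st_step})--(\ref{item:generalization_step}).

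First I would observe how such a variable participates in the CNF expansion. If $V$ appears only in the $\oplus_{\leq 1}$ statement $\oplus_{\leq 1}(V_1, \ldots, V_k, V)$ and in no $\bigvee$ clause, then every clause in which $V$ occurs is of the form $\neg V \vee \neg V_i$ coming from the pairwise negated expansion of the $\oplus_{\leq 1}$. In particular, $V$ occurs only negatively in the entire formula. Setting $V :=$ FALSE therefore satisfies every clause in which $V$ appears, and leaves every other clause untouched. The resulting formula (equivalently, the formula obtained by striking $V$ from the $\oplus_{\leq 1}$ statement) is equi-satisfiable with the original, and any solution of the shorter formula extends to a solution of the original by assigning $V$ the value FALSE.

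Next I would interpret this in CMP language to confirm that no structural harm is done. Each $\oplus_{\leq 1}$ statement corresponds to a boy $b$, and each variable $b_{ij}$ inside it encodes a potential pairing of $b$ with the girl $g_j$ responsible for the $j$th tradeoff set. A variable appearing in the $\oplus_{\leq 1}$ but in no positive disjunction corresponds to a pairing option that no girl actually requests; it is an element of the ``universe of boys'' that no girl has listed for herself in the encoding, and thus is safely removable from that boy's occurrence list. Deleting such variables preserves the bijective correspondence between $\bigvee$ clauses and tradeoff sets, and between $\oplus_{\leq 1}$ statements and the (possibly narrowed) set of usages of each boy.

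Finally I would conclude: after iteratively removing every orphan variable from its $\oplus_{\leq 1}$ statement, the remaining formula obeys criteria (\ref{item:1st_step})--(\ref{item:generalization_step}) verbatim and so is a CMP-SAT instance in the strict sense. Hence recognition extends to the enlarged class: one simply identifies and drops any variables in $\oplus_{\leq 1}$ statements that do not appear positively in some disjunction, then applies the original recognition procedure. There is no real obstacle here; the observation amounts to noting slack in the recognition criteria, and the ``hard part,'' if any, is merely checking that dropping such variables does not inadvertently collapse a non-empty $\oplus_{\leq 1}$ into a trivial one (in which case the statement is vacuously satisfied and the corresponding boy can be deleted from $B$ with no effect on the CMP).
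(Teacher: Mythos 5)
Your proof is correct and takes essentially the same approach as the paper, which simply notes that such variables can be removed from their $\oplus_{\leq 1}$ statements to yield an equi-satisfiable formula; your pure-literal argument (the orphan variable occurs only negatively, so setting it to FALSE is harmless) is the natural justification of that one-line reason.
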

\noindent \textbf{Reason:} We may remove any such variables from their associated $\oplus_{\leq 1}$ statements and get an equi-satisfiable formula.

\begin{observ} \label{obs2} Variables appearing in the same $\oplus_{\leq 1}$ statement can appear together in a single $\bigvee$ clause rather than in distinct $\bigvee$ clauses.\end{observ}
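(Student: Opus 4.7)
The plan is to reduce any formula meeting the generalized criterion to one meeting the strict criterion of step (\ref{item:generalization_step}), in polynomial time, via iterated literal deletion. Suppose two variables $X_\alpha$ and $X_\beta$ drawn from the same statement $\oplus_{\leq 1}(X_1, \ldots, X_k)$ both appear in a single all-positive $\bigvee$ clause $C$. I would delete $X_\beta$ from $C$ (keeping $X_\alpha$) and then, invoking Observation \ref{obs1}, delete $X_\beta$ from the $\oplus_{\leq 1}$ statement, since $X_\beta$ now appears in no $\bigvee$ clause.

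To establish equisatisfiability, consider the forward direction. Given a satisfying assignment of the original formula, the $\oplus_{\leq 1}$ constraint forces at most one of $X_\alpha, X_\beta$ to be TRUE; if $X_\beta$ is the TRUE one, swap to $X_\alpha := \text{TRUE},\ X_\beta := \text{FALSE}$. Because each variable appears in at most one $\bigvee$ clause and at most one $\oplus_{\leq 1}$ statement (by the standing recognition conditions), this swap cannot break anything outside of $C$ and the given $\oplus_{\leq 1}$ statement, and the shrunken $C$ still has $X_\alpha$ TRUE. In the reverse direction, take a satisfying assignment of the shrunken formula and extend it by setting $X_\beta := \text{FALSE}$; the original $C$ is still satisfied because no literal that was TRUE in the shrunken $C$ has been removed, and the $\oplus_{\leq 1}$ constraint on the original variable set is still satisfied because adjoining a FALSE literal cannot increase the number of TRUEs.

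Iterating the shrink step strictly reduces the total count of (clause, pair of same-group variables) incidences, so the procedure terminates after polynomially many steps in a formula satisfying the strict recognition criteria (\ref{item:1st_step})--(\ref{item:generalization_step}), and is therefore a CMP-SAT instance. A satisfying assignment of the original formula is then recovered by setting all deleted variables to FALSE.

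The main point of care I anticipate is verifying that the deletion preserves the remaining structural properties of the formula. Since the deletion only removes literal occurrences and never introduces new ones, each surviving variable still occurs in at most one $\bigvee$ clause and at most one $\oplus_{\leq 1}$ statement, and the pairwise-negated-literal expansion of the shrunken $\oplus_{\leq 1}$ statement is obtained from the original simply by deleting all pair clauses that mention $X_\beta$, so it faithfully encodes the at-most-one relation on the remaining variables.
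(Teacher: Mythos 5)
Your proposal is correct and matches the paper's own argument: the paper likewise removes one of the two co-occurring variables (say $A_j$) from the shared $\bigvee$ clause, relying on the fact that it appears nowhere else, to obtain an equi-satisfiable formula. Your write-up merely adds the explicit two-direction equisatisfiability check, the appeal to Observation \ref{obs1} for cleanup, and the termination count, all of which are consistent elaborations of the same idea.
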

\noindent \textbf{Reason:} Suppose we have $\oplus_{\leq 1}(A_1,...,A_K)$, together with $A_i$ and $A_j$ appearing in the same $\bigvee$ clause for some $1 \leq i < j \leq k$. Given that $A_i$ and $A_j$ appear in no other clauses, we can just remove $A_j$, say, from the $\bigvee$ clause and get an equi-satisfiable formula.

\medskip

In order to obtain our final observation, which refines the statement about the optional $\bigvee$ clauses in recognition step (\ref{item:key_step}), let us pause to establish the following:

\begin{theorem} \label{thm:smp} \textbf{Symmetric Marriage Theorem} Let $G$ be a set of Girls and $B$ a set of Boys. Suppose a subset of the girls each makes a list of the boys she is willing to marry and similarly a subset of the boys each makes a list of the girls he is willing to marry. Assume that any girl who does not make a list is willing to marry any boy, or not be married at all, and analogously for the boys.  If there is a pairing of girls with boys that makes everyone happy, then such a pairing can be found in polynomial time. Further, if no such pairing exists, this fact too can be determined in polynomial time.
\end{theorem}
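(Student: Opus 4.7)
The plan is to reduce to bipartite matching with a two-sided saturation constraint, which is polynomial-time decidable via max flow with lower bounds.

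First I would construct an auxiliary bipartite graph $H$ on vertex set $G \sqcup B$ whose edges record mutual acceptance: $\{g, b\} \in E(H)$ iff ($g$ made no list or $b$ is on $g$'s list) and symmetrically for $b$. Let $G_L \subseteq G$ and $B_L \subseteq B$ denote the sets of list-submitters. A pairing making everyone happy corresponds exactly to a matching $M$ of $H$ whose vertex set contains $G_L \cup B_L$: each list-submitter must be matched (and only to a partner from its list, since non-acceptable pairs are not even edges of $H$), while non-list-submitters are content whether they are matched or not.

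Next I would reduce finding such a matching to a feasibility question for max flow with integer lower bounds. Build a network with source $s$, sink $t$, unit-capacity arcs $s \to g$ for each $g \in G$, arcs $g \to b$ of unit capacity for each $\{g, b\} \in E(H)$, and arcs $b \to t$ of unit capacity for each $b \in B$, together with a lower bound of $1$ on the arcs $s \to g$ for $g \in G_L$ and on the arcs $b \to t$ for $b \in B_L$. Integral feasible flows in this network are in bijection with matchings of $H$ that saturate $G_L \cup B_L$. Feasibility of lower-bounded max flow reduces in polynomial time to ordinary max flow via the standard super-source / super-sink construction, and integrality of the recovered flow is automatic from the unit capacities.

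In the infeasible case, a min-cut of the transformed network supplies a polynomial-size certificate of unsolvability, which functions as a two-sided generalization of the usual Hall deficiency obstruction. The main subtlety is precisely this two-sidedness of the saturation requirement: classical Hall or K\"onig suffices for saturating only one side of a bipartite graph, whereas here we must simultaneously saturate prescribed subsets on both sides, and this is exactly what max flow with lower bounds is designed to decide. A purely combinatorial alternative is to compute a bipartite matching saturating $G_L$ by the classical CMP algorithm and then repeatedly seek alternating paths to bring uncovered vertices of $B_L$ into the matching without expelling any element of $G_L$; one must verify that no such path exists precisely when the above min-cut obstruction is present, but the flow formulation subsumes both directions cleanly.
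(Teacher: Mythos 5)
Your reduction is correct, but it is genuinely different from the paper's. You encode mutual acceptability directly as a bipartite graph $H$ on $G \sqcup B$ and observe that a happy pairing is exactly a matching of $H$ saturating $G_L \cup B_L$; you then decide existence of such a matching via max flow with lower bounds of $1$ on the arcs incident to list-makers, with integrality and the min-cut ``no certificate'' coming for free. The paper instead builds a four-part graph $G^*(\mathscr{S})$ with an extra list node $L_g$ for each $g \in G_L$ and $L_b$ for each $b \in B_L$, joins each list-compatible pair $(g,b)$ by the two edges $(g,L_b)$ and $(L_g,b)$, and proves that the SMP is solvable iff a maximum matching of $G^*$ has size exactly $|G_L|+|B_L|$; the bulk of that proof is a surgery argument repairing ``mismatched edges'' (where $(g_i,L_{b_j})$ is in the matching but $(L_{g_i},b_j)$ is not) so that a maximum matching can be converted into a pairing. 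Your route buys a shorter correctness argument --- the two-sided saturation requirement, which is the only real subtlety, is exactly what lower-bounded flows handle, so no mismatched-edge repair is needed --- while the paper's route stays entirely in matching language, which it then reuses in its corollary (solvability iff the two one-sided CMPs $\mathscr{C}_G$ and $\mathscr{C}_B$ are each solvable, essentially the Mendelsohn--Dulmage theorem); your combinatorial alternative at the end, growing a matching that saturates $G_L$ and then alternating to absorb $B_L$, is in effect the standard proof of that same fact and, as you say, is subsumed by the flow formulation.
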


\smallskip

Let us call the problem associated with the Symmetric Marriage Theorem the \textbf{Symmetric Marriage Problem} or \textbf{SMP}. The CMP is a special case of the SMP where each girl makes a list but no boy does. Consideration of the SMP will be central to our final observation about generalizing the recognition criteria. Given an SMP instance, let $G_L \subset G$ denote the girls with lists and $B_L \subset B$ denote the boys with lists.  With that notation in hand, denote an SMP instance using the tuple $\mathscr{S} = (G, G_L, B, B_L, \{\mathscr{T}_g\}_{g \in G_L}, \{\mathscr{T}_b\}_{b \in B_L})$, where of course $\{\mathscr{T}_b\}_{b \in B_L}$ denotes the preferences of the boys with lists and $\{\mathscr{T}_g\}_{g \in G_L}$ the preferences of the girls with lists.


\begin{proof} 
Given an SMP instance $\mathscr{S} = (G, G_L, B, B_L, \{\mathscr{T}_g\}_{g \in G_L}, \{\mathscr{T}_b\}_{b \in B_L})$, create a graph consisting of four sets of nodes. Two of the sets will be the elements of $G$ and $B$, which we will denote by these same letters. In addition, create a node $L_g$ for each girl $g \in G_L$, with $L_g$ representing the set of \textit{boys} on girl $g$'s list for whom $g$ is also on the respective boy's list. In other words $b \in L_g$ iff $b$ is on $g$'s list and $g$ is on $b$'s list. If $g$ and $b$ are both on each other's lists let us call them \textbf{list-compatible}. Denote the set $\{L_g\}_{g \in G_L}$ by $L_G$. The fourth group of nodes then consists of a node $L_b$ for each boy $b \in B_L$, with $L_b$ representing the set of \textit{girls} on boy $b$'s list that are list-compatible with him. Denote the set $\{L_b\}_{b \in B_L}$ by $L_B$.

From the SMP $\mathscr{S}$, and the four sets of nodes as just described, we form a graph, which we shall denote by $G^*(\mathscr{S})$, where we connect a vertex $g \in G$ to vertex $b \in B$ iff either (i) $b$ is on $g$'s list and $b$ has no list, or (ii) $g$ is on $b$'s list and $g$ has no list. Next if $g \in G_L$ and $b \in B_L$ and $g$ and $b$ are list-compatible we connect $g$ to $L_b$ and $b$ to $L_g$. A sample SMP instance $\mathscr{S}$ and associated graph $G^*(\mathscr{S})$ are depicted in Figure \ref{fig:4_partitions}.
\begin{figure}[h] 
\centerline{\scalebox{0.33}{\includegraphics{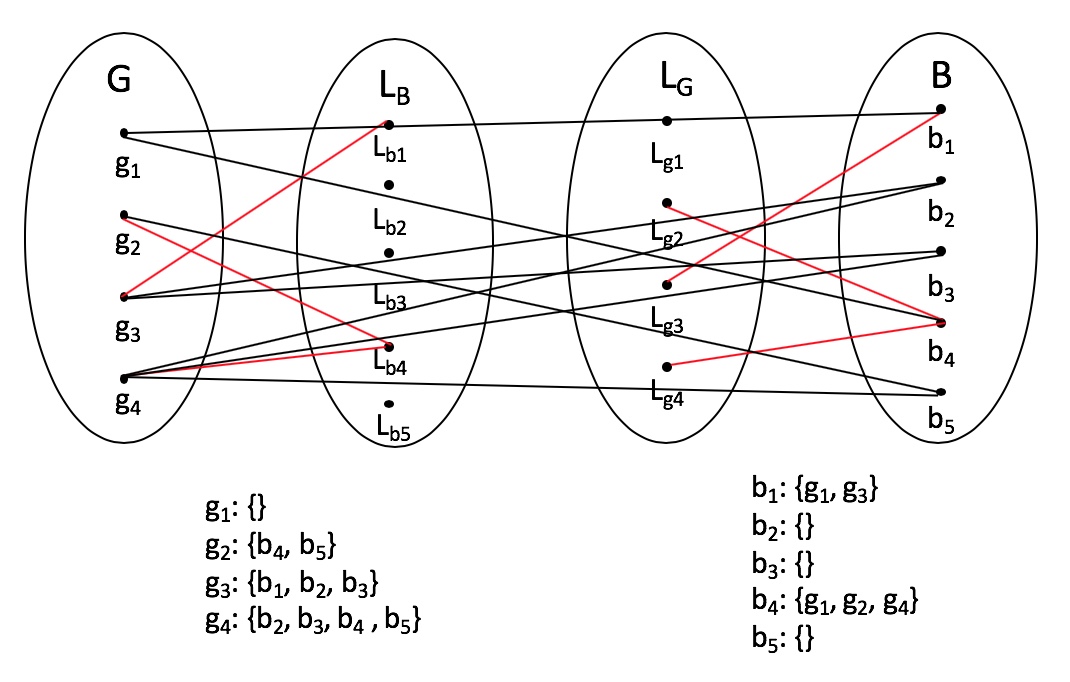}}}
\caption{A sample SMP instance $\mathscr{S}$ together with associated graph $G^*(\mathscr{S})$. The lists of the girls are given on the bottom left and the lists of the boys are given on the bottom right. Edges in $G^*(\mathscr{S})$ between vertices in $G$ and vertices in $B$ are drawn in black, while edges incorporating vertices from $L_B$ or $L_G$ are drawn in red for clarity.} \label{fig:4_partitions}
\end{figure} 

We claim that an SMP $\mathscr{S}$ has a solution iff a maximum matching in $G^*(\mathscr{S})$ is of size $|G_L| + |B_L|$. Since a maximum matching can be found in polynomial time for an arbitrary graph using Edmonds' blossum algorithm\footnote{In fact, we can form one partition out of the vertices in $G \cup L_G$ and another out of the vertices in $B \cup L_B$ and get a bipartite graph and solve the problem using bipartite matching.}\cite{EDMONDS:1965}, once we establish this claim we have proven the theorem. 

To see that a solvable SMP $\mathscr{S}$ gives rise to maximum matching $M$ of size $|G_L| + |B_L|$ in $G^*(\mathscr{S})$, let $P$ be the pairing of elements of $G$ with elements of $B$ providing a solution to $\mathscr{S}$. In other words $P$ is a partial function from $G$ to $B$ with $G_L$ a subset of its domain and $B_L$ a subset of its range. If $P(g) = b$ and exactly one of $g$ and $b$ have lists then we place $(g, b)$ in our matching $M$. On the other hand, if $P(g) = b$ and $g$ and $b$ are list-compatible (the only other possibility), then we place $(g, L_b)$ and $(L_g, b)$ in $M$. Since an SMP solution pairs every girl or boy with a list and does not pair anyone twice, $M$ is a matching and of size $|G_L| + |B_L|$. Since all the edges of $G^*(\mathscr{S})$ touch some vertex in either $G_L$ or $B_L$, but no two edges in a matching can include the same vertex, a matching in $G^*$ is necessarily of size at most $|G_L| + |B_L|$. Hence $M$ is a maximum matching.

Now suppose we have a matching $M$ of size $|G_L| + |B_L|$ in $G^*(\mathscr{S})$. We will describe how to turn $M$ into a solution to $\mathscr{S}$. If $M$ contains edges $(g,b)$ and any time $M$ contains an edge $(g_i, L_{b_j})$ it contains the corresponding edge $(L_{g_i}, b_j)$ then we can convert such a matching into a pairing $P$ such that $P(g) = b$ if $(g,b) \in M$ and $P(g_i) = b_j$ if both $(g_i, L_{b_j}), (L_{g_i}, b_j) \in M$. It is easy to check that $P$ is then a solution to $\mathscr{S}$. Thus if $M$ does not immediately give us solution to $\mathscr{S}$ it must be the case that are some number of edges $K > 0$ with $(g_i, L_{b_j}) \in M$ but $(L_{g_i}, b_j) \notin M$, or vice versa. Call these $K$ edges the \textit{mismatched edges}. Under these circumstances we show how to alter matching $M$ to get a new matching $M'$, still of size  $|G_L| + |B_L|$ but with fewer than $K$ mismatched edges. We can run this process at most $K$ times in succession to get a matching without mismatched edges, and then turn M into a solution to $\mathscr{S}$ completing the proof.

Thus, without loss of generality, suppose we have $(g_1, L_{b_1}) \in M$, establishing that $g_1$ and $b_1$ are list-compatible, but $(L_{g_1}, b_1) \notin M$. If there is no edge in $M$ that includes $L_{g_1}$ then we can just replace whatever edge includes $b_1$ in $M$ with $(L_{g_1}, b_1)$ and still have a maximum matching, but with fewer mismatched edges, which would complete the proof. Thus assume we have $(L_{g_1}, b_2) \in M$, with $b_2 \neq b_1$, establishing the list-compatibility of $g_1$ and $b_2$. Now suppose there were no edge in $M$ containing $L_{b_2}$. Then we could swap $(g_1, L_{b_1})$ in $M$ for $(g_1, L_{b_2})$, keep the maximum matching and again reduce the number of mismatched edges, completing the argument.  Thus suppose we have $(g_2, L_{b_2}) \in M$.  As earlier, we must have some edge connected to $L_{g_2}$. If it is $b_1$ then we can replace $(L_{g_1}, b_2), (L_{g_2},b_1)$ with $(L_{g_1}, b_1), (L_{g_2},b_2)$, reducing the number of mismatched edges. Hence we must have $b_3 \notin \{b_1, b_2\}$ with $(L_{g_2}, b_3) \in M$. Continuing in this fashion, given that $|B_L|$ is finite, we eventually must have $(L_{g_j}, b_1) \in M$ for some $j$. The situation is depicted in Figure \ref{fig:mismatched_edges}.
\begin{figure}[h] 
\centerline{\scalebox{0.33}{\includegraphics{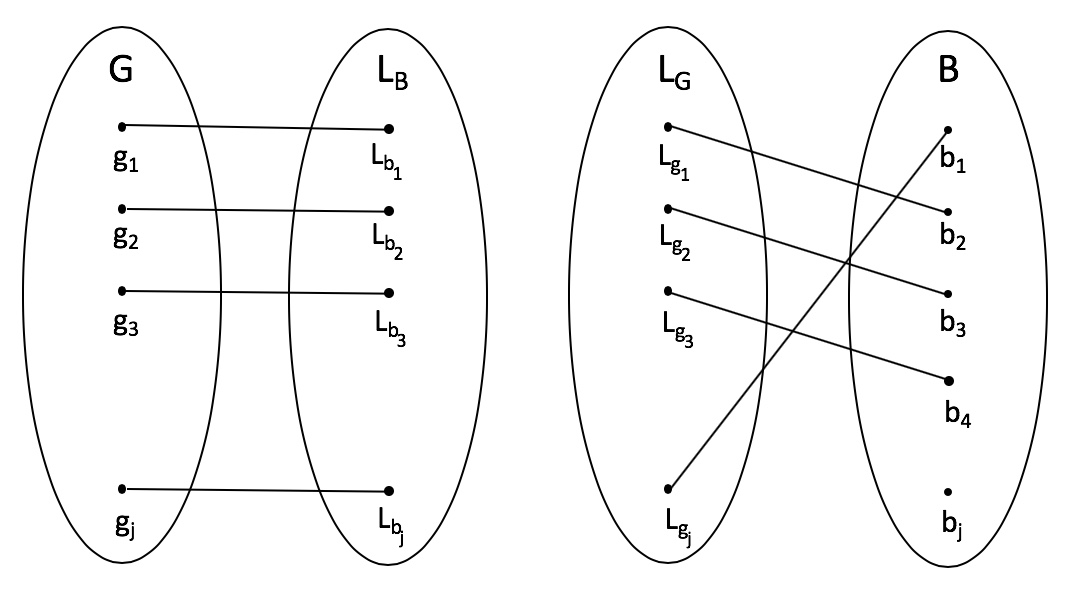}}}
\caption{Mismatched edges of the form $(g_i, L_{b_i}), (L_{g_i}, b_{i+1})$, together with $(g_j, L_{b_j}), (L_{g_j}, b_1)$, as in the proof of Theorem \ref{thm:smp}}. \label{fig:mismatched_edges}
\end{figure} 
At this point we can replace $\{(L_{g_i}, b_{i+1})\}_{i = 1}^{j-1}\} \cup \{(L_{g_j}, b_1)\}$ with $\{(L_{g_i},b_i)\}_{i=1}^j$, retaining a maximum matching and reducing the number of mismatched edges. In all cases, therefore, we see that we can convert $M$ to another maximal matching $M'$, while reducing the number of mismatched edges. The proof is therefore complete.
\end{proof}

Consider the SMP $\mathscr{S} = (G, G_L, B, B_L, \{\mathscr{T}_g\}_{g \in G_L}, \{\mathscr{T}_b\}_{b \in B_L})$. Pare down the list of each $g \in G_L$ such that $b \in \mathscr{T}_g \cap B_L$ only if $g \in \mathscr{T}_b$, in other words, only if $b$ and $g$ are list-compatible. There is no loss in generality in doing so, since if $g \in G_L$ and $b \in \mathscr{T}_g \cap B_L$ then we can only pair $g$ with $b$ if the two elements are list-compatible. Refer to the pared down set of lists as $\{\mathscr{T}^*_g\}_{g \in G_L}$. Analogously, pare down the list of each $b \in B_L$ such that  $g \in \mathscr{T}_b \cap G_L$ only if $g$ and $b$ are list compatible and refer to the pared down set of lists as $\{\mathscr{T}^*_b\}_{b \in B_L}$.

\begin{cor} Consider the SMP $\mathscr{S} = (G, G_L, B, B_L, \{\mathscr{T}_g\}_{g \in G_L}, \{\mathscr{T}_b\}_{b \in B_L})$. Then $\mathscr{S}$ is solvable iff the respective CMPs $\mathscr{C}_G = (G_L, B, \{\mathscr{T}^*_g\}_{g \in G_L})$ and $\mathscr{C}_B = (B_L, G, \{\mathscr{T}^*_b\}_{b \in B_L})$ are both solvable.
\end{cor}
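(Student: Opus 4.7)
The plan is to prove the two directions separately, handling the forward direction essentially by inspection and the backward direction through a union-of-matchings argument. For the forward direction, given an SMP solution $P$, the restriction $P|_{G_L}$ is an injection from $G_L$ into $B$; for each $g \in G_L$ with $P(g) \in B_L$ the SMP conditions force both $P(g) \in \mathscr{T}_g$ and $g \in \mathscr{T}_{P(g)}$, so $g$ and $P(g)$ are list-compatible and $P(g) \in \mathscr{T}^*_g$, while if $P(g) \notin B_L$ then paring does not affect this element and $P(g) \in \mathscr{T}^*_g$ trivially. Hence $P|_{G_L}$ solves $\mathscr{C}_G$, and the mirror-image argument applied to $P^{-1}|_{B_L}$ solves $\mathscr{C}_B$.

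For the backward direction, given solutions $P_G$ to $\mathscr{C}_G$ and $P_B$ to $\mathscr{C}_B$, I would form, in the bipartite graph on $G \cup B$, the two matchings $M_1 = \{(g, P_G(g)) : g \in G_L\}$ and $M_2 = \{(P_B(b), b) : b \in B_L\}$ and examine their union. Every vertex has degree at most two in $M_1 \cup M_2$, so this union decomposes as a disjoint collection of paths and even cycles in which consecutive edges strictly alternate between $M_1$ and $M_2$. The goal is to select on each component a sub-matching of $M_1 \cup M_2$ that saturates every vertex of $G_L \cup B_L$ lying on it; gluing these choices together yields a matching $M^*$ saturating $G_L \cup B_L$ in its entirety, from which one reads off a partial function $P: G \to B$ that is the desired SMP pairing. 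Validity of $P$ follows from the definition of the pared lists together with list-compatibility: an $M_1$-edge $(g, P_G(g))$ with $P_G(g) \in B_L$ automatically has $g \in \mathscr{T}_{P_G(g)}$ because $P_G(g) \in \mathscr{T}^*_g$ forces list-compatibility, and symmetrically for $M_2$-edges.

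The main obstacle is the component-by-component selection. An interior vertex of a path, or any vertex on a cycle, has degree two and hence has both an $M_1$- and an $M_2$-edge incident to it; this forces such a vertex in $G$ to lie in $G_L$ and such a vertex in $B$ to lie in $B_L$, so every such vertex must be covered. Cycles pose no difficulty, since either the set of all $M_1$-edges or the set of all $M_2$-edges on the cycle is a perfect matching of its vertices. For a path $v_0 - v_1 - \cdots - v_k$, if $k$ is odd then the alternate edges $e_1, e_3, \ldots, e_k$ form a perfect matching. The delicate case is $k$ even, where a matching of the path must miss exactly one vertex, necessarily an endpoint; it suffices then to show that at most one endpoint lies in $G_L \cup B_L$. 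If both $v_0$ and $v_k$ lay in $G_L \cup B_L$, then since $k$ is even they would sit on the same side of the bipartition, hence both in $G_L$ or both in $B_L$, so their unique incident edges $e_1$ and $e_k$ would both lie in the same matching ($M_1$ in the first case, $M_2$ in the second); but edges along the path strictly alternate, so $e_1$ and $e_k$ being in the same matching forces $k$ to be odd, a contradiction. Dropping this uncovered endpoint and taking the matching of size $k/2$ that covers the remaining vertices completes the construction.
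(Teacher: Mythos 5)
Your proof is correct, but it takes a genuinely different route from the paper's. For the nontrivial direction the paper stays inside the auxiliary four-part graph $G^*(\mathscr{S})$ constructed in the proof of the Symmetric Marriage Theorem: a solution to $\mathscr{C}_G$ yields a matching saturating $G_L$ that uses only vertices in $G_L \cup (B \setminus B_L) \cup L_B$, a solution to $\mathscr{C}_B$ yields one saturating $B_L$ using only vertices in $B_L \cup (G \setminus G_L) \cup L_G$, these vertex sets are disjoint, so the union is a matching of size $|G_L| + |B_L|$ in $G^*(\mathscr{S})$, and the theorem's ``mismatched edges'' repair procedure then converts it into an SMP solution. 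You instead work directly in the natural bipartite graph on $G \cup B$, take the union $M_1 \cup M_2$ of the two induced matchings, decompose it into alternating paths and even cycles, and select on each component a sub-matching covering every vertex of $G_L \cup B_L$; the only delicate case, an even-length path, is dispatched by the parity observation that if both endpoints lay in $G_L \cup B_L$ their unique incident edges would belong to the same matching, contradicting alternation. Your supporting checks are all sound: an $M_1$-edge landing in $B_L$ is automatically list-compatible because $\mathscr{T}^*_g$ was pared, every vertex of $G_L \cup B_L$ has positive degree in $M_1 \cup M_2$, and degree-two vertices necessarily lie in $G_L \cup B_L$. What each approach buys: the paper's corollary is a two-line consequence of machinery it has already paid for, whereas yours is self-contained, never invokes $G^*(\mathscr{S})$ or the repair argument, and is the classical union-of-two-matchings construction. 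One cosmetic quibble: a maximum matching of an even-length path need not miss an endpoint (it can miss an interior vertex), but since you explicitly take the alternating matching that omits a designated endpoint outside $G_L \cup B_L$, nothing in the argument is affected.
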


\begin{proof}
If $\mathscr{S}$ is solvable then by definition $\mathscr{C}_G$ and $\mathscr{C}_B$ are both solvable, so assume that $\mathscr{C}_G$ and $\mathscr{C}_B$ are each solvable. As in the proof of the Symmetric Marriage Theorem (Theorem \ref{thm:smp}) consider the graph $G^*(\mathscr{S})$. By virtue of the fact that $\mathscr{C}_G$ has a solution there is a matching of $G_L$ with $L_B \cup B$ of size $|G_L|$. Analogously, since $\mathscr{C}_B$ has a solution there is a matching of $B_L$ with $L_G \cup G$ of size $|B_L|$. These matchings are pairwise disjoint and, moreover, the edges have no vertices in common. Taken together, therefore, they give rise to a maximum matching in $G^*(\mathscr{S})$ of size $|L_G| + |L_B|$. As shown in the proof of the prior theorem, a matching of size $|L_G| + |L_B|$ in $G^*(\mathscr{S})$ guarantees a solution to $\mathscr{S}$, and so the result is established.
\end{proof}

\begin{observ} \label{obs3} We may optionally have $\bigvee$ clauses that contain arbitrary subsets of the variables appearing in any one of the $\oplus_{\leq 1}$ statements, not necessarily all of the variables. \end{observ}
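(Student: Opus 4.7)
The plan is to show that SAT instances satisfying the relaxed recognition criteria---where an optional all-positive $\bigvee$ clause may now consist of any subset of the variables of a single $\oplus_{\leq 1}$ statement, rather than necessarily all of them---correspond naturally to instances of the Symmetric Marriage Problem, and hence are poly-time decidable by Theorem \ref{thm:smp}.

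First I would make explicit the underlying girl/boy correspondence in the CMP-SAT encoding. Each all-positive $\bigvee$ clause of recognition step (\ref{item:key_step}) represents a girl's tradeoff set (her list of acceptable boys), and each $\oplus_{\leq 1}$ statement represents a boy who may be married to at most one girl. An optional all-positive clause whose variables \emph{exactly} coincide with those of some $\oplus_{\leq 1}(A_1, \ldots, A_K)$ strengthens the ``at most one'' into ``exactly one,'' demanding that the boy be married---which is precisely the CMP requirement that every boy be paired. The key point for the present generalization is that an optional clause $A_{i_1} \vee \cdots \vee A_{i_m}$ with $\{i_1, \ldots, i_m\} \subsetneq \{1, \ldots, K\}$ instead says that the boy must be married to one of a \emph{specified sublist} of the girls who originally included him---i.e., this boy now has his own list, pulling the problem out of the CMP and into the SMP world.

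Next I would give the explicit reduction. From the SAT formula construct $\mathscr{S} = (G, G_L, B, B_L, \{\mathscr{T}_g\}_{g \in G_L}, \{\mathscr{T}_b\}_{b \in B_L})$: one girl in $G_L$ per all-positive clause of step (\ref{item:key_step}), with list equal to the boys corresponding to that clause's variables; one boy in $B$ per $\oplus_{\leq 1}$ statement, placed in $B_L$ iff accompanied by at least one optional subset clause, in which case his list is the intersection of the girl-sublists arising from those optional clauses. Bridge variables are set aside for the moment. I would then verify equivalence in both directions: a satisfying assignment (modulo bridge variables) selects at most one true variable per $\oplus_{\leq 1}$ and at least one per $\bigvee$, yielding an SMP pairing that respects both girls' and boys' lists; conversely, an SMP pairing lifts to a truth assignment by setting the selected $b_{ij}$ to TRUE, all others to FALSE, and then setting the bridge variables using step (8) of the CMP-SAT solve procedure.

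The main obstacle I anticipate is bookkeeping rather than conceptual difficulty: verifying that multiple optional clauses attached to the same $\oplus_{\leq 1}$ behave correctly under intersection of lists, and that the list-compatibility paring used inside the proof of Theorem \ref{thm:smp} is respected by the truth-value round-trip. Once these details are checked, Theorem \ref{thm:smp} yields an immediate polynomial-time decision procedure, and Observations \ref{obs1}, \ref{obs2}, and the present one together delineate the full relaxation of the recognition criteria under which one remains within a poly-time decidable subclass of SAT.
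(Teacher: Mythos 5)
Your proposal is correct and follows essentially the same route as the paper: interpret an optional $\bigvee$ clause over a proper subset of an $\oplus_{\leq 1}$ statement's variables as that boy supplying his own list of acceptable girls, thereby converting the instance into a Symmetric Marriage Problem and invoking Theorem \ref{thm:smp} for polynomial-time decidability. Your handling of several optional clauses attached to one $\oplus_{\leq 1}$ statement via intersection of the corresponding sublists is exactly the content the paper defers to Theorem \ref{thm:sum_up}, where such clauses are shown to collapse to their common part.
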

\noindent \textbf{Reason:} In the canonical representation (\ref{eqn:canonical_embedding}), the $\oplus_{\leq 1}$ statements corresponded to the fact that a given boy could be paired with at most one of the girls. A $\bigvee$ clause  with a subset of these variables corresponds to a boy indicating which, of the girls that are interested in him, he is willing to be paired with in an associated SMP, which is solvable as per Theorem \ref{thm:smp}. The case where there are multiple $\bigvee$ clauses of this sort, with a subset of variables coming from the same $\oplus_{\leq 1}$ statement, is covered in the proof of the following theorem, which summarizes all of the observations.

\begin{theorem} \label{thm:sum_up}
Consider any SAT formula that can be written in CNF with clauses strictly of the form $\oplus_{\leq 1}(A_{i1},...,A_{iL_i})$ and $\bigvee(A_{j1},...,A_{jM_j})$, where the $A_{ij}$ appearing across all the $\oplus_{\leq 1}$ clauses are unique and the $A_{kl}$ appearing across the $\bigvee$ clauses are unique except for some $\bigvee$ clauses, each of which has variables that are a subset of the variables appearing in some one $\oplus_{\leq 1}$ statement. Such SAT problems are solvable in polynomial time by translation to the SMP.
\end{theorem}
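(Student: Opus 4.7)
My plan is to construct, from a given SAT formula $\phi$ of the stated form, a corresponding SMP instance $\mathscr{S}$ in polynomial time, and to show that $\phi$ is satisfiable iff $\mathscr{S}$ has a solution; at that point Theorem \ref{thm:smp} finishes the argument. The construction is essentially the CMP$\to$SAT encoding from Section 4 read backwards, augmented so as to accommodate the optional $\bigvee$ clauses permitted by Observation \ref{obs3}.

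First I would introduce one boy $b_i$ for each $\oplus_{\leq 1}$ statement $\oplus_{\leq 1}(A_{i1},\ldots,A_{iL_i})$, viewing the literals $A_{ij}$ as names of the pairings involving $b_i$. By Observation \ref{obs1} I may assume without loss of generality that every such $A_{ij}$ also appears in some $\bigvee$ clause, and by Observation \ref{obs2} that no two literals from the same $\oplus_{\leq 1}$ ever appear together in a non-optional $\bigvee$ clause. Then, for each standard $\bigvee$ clause --- one whose variables are spread across distinct $\oplus_{\leq 1}$ statements --- I would create a girl $g_j$ whose list consists of exactly those boys $b_i$ for which some variable of her clause lies in $b_i$'s $\oplus_{\leq 1}$. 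Under this correspondence, setting $A_{ij}$ to TRUE is identified with pairing $g_j$ with $b_i$.

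For each optional $\bigvee$ clause --- one whose variables are all drawn from a single $\oplus_{\leq 1}$ statement, say that of $b_i$ --- I would interpret the clause as restricting $b_i$'s SMP list to the set of girls corresponding to those variables. If $b_i$ is constrained by several optional clauses at once, his list is set to the intersection of the induced subsets, since the unique TRUE literal in his $\oplus_{\leq 1}$ must simultaneously satisfy all of them. Correctness in both directions is then largely bookkeeping: a satisfying assignment pairs $b_i$ with the girl corresponding to its unique TRUE literal (and leaves $b_i$ unpaired if no such literal exists), and conversely an SMP solution recovers an assignment by setting exactly the matched literals to TRUE and all others to FALSE.

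The step I anticipate requiring the most care is the optional-clause case: I must verify that intersecting the lists for a boy constrained by several optional clauses faithfully captures the joint SAT constraint, and that an empty intersection --- which forces $b_i$ to be unpairable in the SMP, yet required to be paired --- correctly certifies unsatisfiability of $\phi$. Modulo this verification, the translation is manifestly linear in $|\phi|$, and Theorem \ref{thm:smp} supplies the polynomial-time decision procedure, giving the result.
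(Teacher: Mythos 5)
Your construction matches the paper's: one boy per $\oplus_{\leq 1}$ statement, one girl per standard $\bigvee$ clause, and each optional $\bigvee$ clause read as the corresponding boy's list, with several optional clauses for the same boy collapsing to the intersection of the induced lists because the at-most-one constraint forces a single common satisfying literal --- which is exactly the argument the paper uses to merge overlapping optional clauses before invoking Observation \ref{obs3} and Theorem \ref{thm:smp}. Your explicit handling of the empty-intersection case (unpairable boy, hence unsatisfiable formula) is a small point the paper glosses over, but the approach and the key verification step are essentially identical.
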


\begin{proof}
From the earlier observations we know that we can turn any SAT formula with clauses of the form $\oplus_{\leq 1}(A_{i1},...,A_{iL_i})$ and $\bigvee(A_{j1},...,A_{jM_j})$ where the $A_{ij}$ appearing across all the $\oplus_{\leq 1}$ clauses are unique, and the $A_{kl}$ appearing across the $\bigvee$ clauses are unique, into a CMP, and hence into an SMP.  Suppose that in addition we have $\bigvee$ clauses, each of which has variables that come from some one $\oplus_{\leq 1}$ statement. The case in which there is just one such $\bigvee$ clause per $\oplus_{\leq 1}$ statement is covered by Observation \ref{obs3}. Suppose then that there are two $\bigvee$ clauses, each containing a different subset of variables from the same $\oplus_{\leq 1}$ statement. The two $\bigvee$ clauses can plainly not be variable-wise disjoint since otherwise at least two of the variables must be TRUE, contradicting the $\leq 1$ aspect of the associated $\oplus_{\leq 1}$ clause. Thus they must share some number of variables. Suppose the two $\bigvee$ clauses are then of the following form:
\begin{eqnarray}
	& A_1 \vee  A_k \vee B \notag \\
	& A_1 \vee A_k \vee C \label{eqn:latter},
\end{eqnarray}
where the subformulas $B$ and $C$ are some further pairwise disjoint disjuncts of variables coming from the same $\oplus_{\leq 1}$ statement, with $B$ and $C$ not both empty. These two clauses taken together (i.e. conjuncted) imply $(A_1 \vee  A_k) \vee (B \land C)$. However, we can't have both $B$ and $C$ unless one of $B$ and $C$ is empty. But then the two clauses (\ref{eqn:latter}) are equivalent to just $A_1 \vee  A_k$. Thus any two $\bigvee$ clauses with variables coming from the same $\oplus_{\leq 1}$ statement can be collapsed to one such clause and the same is true for an arbitrary number of $\bigvee$ clauses with such shared variables. The theorem follows.
\end{proof}

Just like in Horn-SAT, the issue of rewritability comes up in CMP-SAT.  It is, however, rather easy to see that one can recognize whether a given SAT formual is \textbf{CMP-rewritable} simply by focusing on the $\oplus_{\leq 1}$ statements, which must appear in all-pairs families of length $2$, and their relationship to the other clauses. We therefore omit the proof that one can test whether a SAT formula is CMP-rewritable simply by flipping the polarity of some subset of the variables. A more difficult problem awaits us.

\subsection{Recognition of CMP-SAT from a Non-Canonical Representation}

What if we were given this same CMP instance \textit{not} using the canonical representation but, say, in the messy way (\ref{eqn:messy_3cnf}) in which we saw it the first time around. Would we be able to recognize it as a CMP? So far we have just leveraged the fact that SAT is NP-Complete and we can map any CMP back to a SAT problem. But the FMP is also NP-Complete so we can take any SAT instance and map it to an FMP instance and work with it there.  Ideally we would like to map it back to the FMP, see all integral tradeoffs, and declare victory. However, things are not quite that simple. 
In Appendix \ref{app:rfragment_logic_and_fragment_reductions} we go through this mapping-back-to-the-FMP process and describe a set of rules for simplifying FMPs with the aim of arriving at an all integral FMP, and hence a CMP. We call the rules \textbf{Fragment Logic} and the process of trying to obtain a CMP from an FMP by successive application of the rules \textbf{Fragment Reduction}. Among other things we show that Fragment Reduction can always be done in time and space that is polynomial in the size of the input.

\section{The CMP-FMP Dividing Line}

We obviously can't reduce every FMP to a CMP since the FMP is NP-Complete but the CMP is not. It is natural to wonder at what level of expressivity the FMP flips and becomes NP-Complete. The following result answers a question along those lines posed by Joe Mitchell \cite{MITCHELL:2019}. 
It also provides a second, simpler, NP-Complete reduction for the FMP.

\begin{theorem} The FMP is NP Complete even if all fractions are restricted to be $\frac{1}{2}$.
\end{theorem}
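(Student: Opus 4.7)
The plan is to prove NP-hardness by polynomial-time reduction from Tripartite Matching (3DM) into FMP restricted to $\frac{1}{2}$-fractions; membership in NP is inherited from the general FMP, whose solutions are checkable in polynomial time. Given a 3DM instance with disjoint sets $X, Y, Z$ of equal size and a collection of triples $T \subseteq X \times Y \times Z$, I would introduce one \emph{element-boy} $b_e$ for each $e \in X \cup Y \cup Z$, together with a constant number of triple-local auxiliary boys per triple.

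For each triple $t = (x, y, z) \in T$ I would install a \emph{triple gadget} consisting of a small constant number of girls whose tradeoffs are $\frac{1}{2}$-pairs (plus perhaps integral singletons, which contain no non-trivial fractions). The gadget is to be engineered so that it admits exactly two globally consistent states: an \emph{active} state, whose chosen tradeoffs jointly consume $\frac{1}{2}$ of each of $b_x, b_y, b_z$, and an \emph{inactive} state, whose chosen tradeoffs use only the triple-local auxiliaries. To align with 3DM's ``each element is in exactly one triple'' requirement, I would equip each element-boy $b_e$ with a small filler subgadget (a forced $\frac{1}{2}$-tradeoff) that preloads $b_e$ with $\frac{1}{2}$ of its capacity, together with a monitor that is infeasible unless exactly one active triple claims the remaining $\frac{1}{2}$. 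A feasible assignment of the resulting FMP then corresponds bijectively to a perfect 3-dimensional matching of the original instance, and the construction is polynomial in the input size.

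The main obstacle is the triple gadget itself. With only $\frac{1}{2}$-fractions each tradeoff couples at most two boys, so synchronizing three triple-girls around a common ``active vs.\ inactive'' bit is subtle: naive symmetric triangle constructions admit mixed configurations in which only one or two of the three triple-girls pick ``active,'' because the $\frac{1}{2}$-loads on the auxiliaries balance out and no overflow is produced. Overcoming this will likely require an asymmetric auxiliary-sharing pattern, supplemented by an additional forcing girl whose preloaded $\frac{1}{2}$-usage on the auxiliaries makes any strict subset of active choices overflow some boy's capacity, leaving only the two extreme states feasible. Verifying that a small gadget of this form truly has only the two intended states, and that the filler/monitor combination at each element-boy delivers the ``exactly one active triple'' semantics, is the technical heart of the argument; the two directions of the equivalence between FMP feasibility and a perfect tripartite matching are then routine.
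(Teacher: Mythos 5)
Your reduction source (Tripartite Matching) is the same one the paper uses, but your construction is incomplete at exactly the point you flag: the triple gadget. You never actually exhibit a gadget of $\tfrac12$-tradeoffs with only the two intended global states, and you concede that the naive symmetric versions admit mixed (one- or two-active) configurations; the ``asymmetric auxiliary-sharing pattern'' and ``forcing girl'' are named but not constructed or verified. Since that gadget is the entire content of the hardness direction, what you have is a plan rather than a proof. The same is true of the ``filler/monitor'' machinery at each element-boy, which is asserted to deliver ``exactly one active triple'' semantics without a construction.

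The difficulty dissolves if you exploit the asymmetry already present in the FMP: let one of the three parts of the tripartition play the role of the girls $G$ rather than turning all three parts into boys. A triple $(g,x_i,y_j)$ then becomes the tradeoff $\{\tfrac12 x_i, \tfrac12 y_j\}$ placed in $\mathscr{T}_g$, so the requirement that each $g$ lies in exactly one chosen triple is enforced for free by the FMP semantics (each tradeoff set must have one tradeoff satisfied), and no active/inactive synchronization gadget is needed. The only remaining constraint --- that each $x_i$ and each $y_j$ is claimed by at most one chosen triple --- is handled by a single fresh element $z$ per used $x_i$ or $y_j$ and a singleton tradeoff set $\{\{\tfrac12 x_i,\tfrac12 z\}\}$ (respectively $\{\{\tfrac12 y_j,\tfrac12 z\}\}$), which is forced and permanently consumes half of that element's capacity; with equal-size parts, ``each girl covered'' plus ``each $x,y$ used at most once'' yields a perfect matching by counting. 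This is the paper's argument, and it avoids the gadget-design problem entirely rather than solving it.
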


\begin{proof}
The reduction is from 3D Tripartite Matching. Let the three sets of the tripartition be $G,X$ and $Y$. We need to be able to represent triples within the FMP and also guarantee that any solution found is necessarily one in which we pick unique elements from each of the sets. 

$G$ will be the same ground set as in the CMP. Instead of specifying fractions of elements $p_i b_i$ with all $b_i \in B$ we will specify fractions of elements $p_i x_i, q_j y_j: x_i \in X, y_j \in Y$. In the CMP, ordered pairs from the bipartition $G \times B$ were represented by singletons of the form $\{b_i\} \in \mathscr{T}_g$, meaning that $(g, b_i) \in G \times B$. We will represent $3$-tuples by $\{\frac{1}{2}x_i, \frac{1}{2} y_j\} \in \mathscr{T}_g$, meaning that $(g,x_i,y_j) \in G \times X \times Y$. Just like any CMP is a collection of tradeoff sets, each comprising integral tradeoffs, a Tripartite Matching  instance is a collection of tradeoff sets, all of whose tradeoffs are half-integral, with one element coming from $X$ and one from $Y$.

Now let $\{x_i\}_{i=1}^K$ be the collection of elements of $X$ used in all such tradeoffs, and analogously, let $\{y_i\}_{i=1}^L$ be the collection of elements of $Y$ used in the various tradeoffs. Add an additional collection of elements $\{z_i\}_{i=1}^{K+L}$ and to ensure that we don't satisfy the same $x_i$ or $y_j$ in two of the previously listed half-integral tradeoffs, for each $x_i$ or $y_j$ we have previously used, add an additional tradeoff set with the single tradeoff $\{\frac{1}{2}x_i, \frac{1}{2}z_i\}$ or  $\{\frac{1}{2}y_j, \frac{1}{2}z_{K+j}\}$. Being tradeoff sets comprising just a single tradeoff all of these tradeoffs must be satisfied. Hence at most one tradeoff including any of the original elements can be satisfied. It follows that a perfect tripartite matching can be found iff the associated FMP can be solved and so the theorem is established.
\end{proof}

\section{Suggested Further Problems}

Our investigations have led us to many additional questions, an assortment of which we now pose:
\begin{enumerate}[1.]
\item If an FMP is ``really a CMP'' is it always possible to apply the fragment logic rules to achieve the reduction? (The notion of ``really a CMP'' may need to be defined more formally.) Sometimes the fragment logic rules solve part or all of the FMP. If we are able to get to an all integral FMP, how do we know whether we have successfully reduced the problem to a CMP or have just solved the FMP?
\item Are there special cases of the FMP other than the CMP that have polynomial time
solutions? Of course there must be -- what do 2-SAT, Horn-SAT and XOR-SAT look like when mapped onto the FMP and how far do we get applying Fragment Logic to these problems? 
\item Tripartite matching generalizes to more generic 3D-matching, which is therefore also NP-complete. Thus polynomial-time perfect matching in general graphs can be extended to an NP-complete problem and to a bi-directional SAT embedding. What does this associated subset of SAT look like? 
\item In theory what we have done for the CMP can be done for any problem in P. In this way we can dice up the landscape of any NP-Complete problem into a whole spectrum of polynomially solvable problem instances. Is there a way to systematically map out this space? 
\item Are constant factor approximations possible for the FMP? In other words, unless P = NP, we can't say for sure whether $B$ will suffice to solve a given $\mathscr{F} = (G, B, \{\mathscr{T}_g\}_{g \in G})$, but perhaps we can do so for $kB$ and some sufficiently large $k$. 
\end{enumerate}

In Observation \ref{obs3} and Theorem \ref{thm:sum_up} we considered just the simplest class of SMP embeddings into SAT where  the elements of $B_L$ have lists comprised solely of elements of $G_L$, and, furthermore, $B \setminus B_L = \varnothing$. General SMP instances have more complicated SAT embeddnings, though still requiring just the $\bigvee()$ and $\oplus_{\leq 1}()$ operators that characterized the CMP embeddings. A key difference is that in the SMP two $\bigvee()$ statements may share some subset of their variables. We intend to explore this embedding more fully in future work. 

\section*{Acknowledgements}
The author gratefully acknowledges helpful conversations with Estie Arkin, Joe Mitchell and Ken Regan.

\bibliographystyle{plain}
\bibliography{fmp_focs}

\section{Appendix}

\subsection{Proofs of the Normal Form Theorems} \label{app:normal_form_proofs}

 \begin{proof} [Proof of Theorem \ref{thm:fmp_1st_normal_form}]
Suppose we have an FMP $\mathscr{F} = (G, B, \{\mathscr{T}_g\}_{g \in G})$ with $B = \{b_1,...,b_M\}$. Let us first establish that we can replace $\mathscr{F}$ with an equi-satisfiable FMP $\mathscr{F}' = (G', B', \{\mathscr{T}_g\}_{g \in G'})$ with at most three tradeoffs per tradeoff set. Let
\begin{equation*} 
	\mathscr{T}_g = \{t_1, ..., t_N\}
\end{equation*}
be a sample tradeoff set with $N \geq 4$. Satisfying $\mathscr{T}_g$ with $B$ is equivalent to satisfying both
\begin{eqnarray*}
	\mathscr{T}_{g_1}  =~ & \{t_1, ..., t_{\lfloor \frac{N}{2} \rfloor},\{b_{M+1}\}\},  \\
	\mathscr{T}_{g_2}  =~ & \{t_{\lfloor \frac{N}{2} \rfloor + 1}, ..., t_N,\{b_{M+1}\}\}
\end{eqnarray*}
with $B \cup \{b_{M+1}\}$. Since $\mathscr{T}_{g_1}$ and $\mathscr{T}_{g_2}$ each contain strictly fewer tradeoffs than $\mathscr{T}_g$ reducibility to an FMP with at most three tradeoffs per tradeoff set is established.

\medskip

To see that we can arrange to put each fractional tradeoff in a tradeoff set with just one other integral tradeoff, note the tradeoff
\begin{equation*} 
	\mathscr{T} = \{t_1, ..., t_N\}.
\end{equation*}
is equivalent to the family:
\begin{flalign*} \label{eqn:1st_splitting}
	\mathscr{T}_1  =~ & \{t_1,  \{b_{M+1}\}\} \notag \\
	... & \\
	\mathscr{T}_N  =~ & \{t_N,  \{b_{M+N}\}\}  \notag \\
	\mathscr{T}_{N+1} =~ & \{ \{b_{M+1}\}, ..., \{b_{M+N}\}\}, \notag
\end{flalign*}
after which we just split the last tradeoff set into tradeoff sets with no more than $3$ tradeoffs each.\\

The run time for this reduction is $O(KN)$ where $K$ is the maximum number of tradoffs in any tradeoff set and $N = |G|$.
\end{proof}

\begin{proof} [Proof of Theorem \ref{thm:fmp_2nd_normal_form}]
Theorem \ref{thm:fmp_1st_normal_form} provides us with a reduction such that all tradeoff sets containing fractional tradeoffs are of the form
\begin{equation} \label{eqn:standard_tradeoff}
	\mathscr{T} = \{\{p_1 b_1,...,p_N b_N\}, \{b\}\}.
\end{equation}
Furthermore, the proof of Theorem \ref{thm:fmp_1st_normal_form} shows that the element $b$ in the second tradeoff above does not appear \textit{fractionally} in any of the other tradeoffs and tradeoff sets.
But now, satisfying tradeoff set (\ref{eqn:standard_tradeoff}) is equivalent to satisfying the following N tradeoff sets:
\begin{flalign} \label{eqnarray:reduction_to_binaries_prelim_step}
	\mathscr{T}_1 =~ & \{\{p_1 b_1, (1-p_1) b_{N+1}\}, \{\frac{1}{N} b, \frac{N-1}{N} b_{N+2}\}\} \notag\\
	\mathscr{T}_2 =~ & \{\{p_2 b_2, (1-p_2) b_{N+3}\}, \{\frac{1}{N} b, \frac{N-1}{N} b_{N+4}\}\} \notag \\
	... \\
	\mathscr{T}_N =~ & \{\{p_N b_N, (1-p_N) b_{3N-1)}\}, \{\frac{1}{N} b, \frac{N-1}{N} b_{3N}\}\} \notag
\end{flalign}
Note that the second fragments in all of these tradeoffs are free. The collection of tradeoff sets $\mathscr{T}_1,...,\mathscr{T}_N$ can now be further reduced to a collection such that there is only one fractional tradeoff per tradeoff set using Theorem \ref{thm:fmp_1st_normal_form}. We note that the fact that the second fragments in all the tradeoff sets are free is preserved. The runtime for both processes is $O(F)$ where $F$ is the total number of fragments appearing across all tradeoffs and tradeoff sets. The proof of the present theorem is thus complete. 
\end{proof}

\subsection{Verification of Non-Horn-rewritability} \label{app:non_horn_rewritability}

\begin{proof} [Proof that $\phi$ is not Horn-rewritable in Theorem \ref{thm:no_union}]
To aid in the following of our argument we refer to Table \ref{table:clauses}, which gives a listing of positive and negative literals by clause number in each of the sub-formulas $\phi_1, \phi_2, \phi_3$ and $\phi_4$ from (\ref{eqn:messy_3cnf}) in the main body of the paper.

\begin{table}[ht] 
\centering
\begin{small}
\begin{tabular}{| c | c | c | c | c | c | c | c | c | c | c |}
\hline
 & \textbf{Clause \#} & $\boldsymbol{b_{11}}$ & $\boldsymbol{b_{12}}$ & $\boldsymbol{b_{14}}$ & $\boldsymbol{b_{21}}$ & $\boldsymbol{b_{23}}$ & $\boldsymbol{b_{31}}$ & $\boldsymbol{b_{32}}$ & $\boldsymbol{b_{43}}$ & $\boldsymbol{b_{44}}$ \\
\hline
$\boldsymbol{\phi_1}$ & 1 & P & & & P & & P & & & \\ 
\hline
$\boldsymbol{\phi_1}$ & 2 & & N & & P & & P & & & \\ 
\hline
$\boldsymbol{\phi_1}$ & 3 & & & N & P & & P & & & \\ 
\hline
$\boldsymbol{\phi_1}$ & 4 & P & & & & N & P & & & \\ 
\hline
$\boldsymbol{\phi_1}$ & 5 & & N & & & N & P & & & \\ 
\hline
$\boldsymbol{\phi_1}$ & 6 & & & N & & N & P & & & \\ 
\hline
$\boldsymbol{\phi_1}$ & 7 & P & & & P & & & N & & \\ 
\hline
$\boldsymbol{\phi_1}$ & 8 & & N & & P & & & N & & \\ 
\hline
$\boldsymbol{\phi_1}$ & 9 & & & N & P & & & N & & \\ 
\hline
$\boldsymbol{\phi_1}$ & 10 & P & & & & N & & N & & \\ 
\hline
$\boldsymbol{\phi_1}$ & 11 & & N & & & N & & N & & \\ 
\hline
$\boldsymbol{\phi_1}$ & 12 & & & N & & N & & N & & \\ 
\hline
$\boldsymbol{\phi_2}$ & 1 & N & & & & & N & & & \\ 
\hline
$\boldsymbol{\phi_2}$ & 2 & & P & & & & N & & & \\ 
\hline
$\boldsymbol{\phi_2}$ & 3 & & & N & & & N & & & \\ 
\hline
$\boldsymbol{\phi_2}$ & 4 & N & & & & & & P & & \\ 
\hline
$\boldsymbol{\phi_2}$ & 5 & & P & & & & & P & & \\ 
\hline
$\boldsymbol{\phi_2}$ & 6 & & & N & & & & P & & \\ 
\hline
$\boldsymbol{\phi_3}$ & 1 & & & & N & & & & P & \\ 
\hline
$\boldsymbol{\phi_3}$ & 2 & & & & & P & & & P & \\ 
\hline
$\boldsymbol{\phi_3}$ & 3 & & & & N & & & & & N\\ 
\hline
$\boldsymbol{\phi_3}$ & 4 & & & & & P & & & & N\\ 
\hline
$\boldsymbol{\phi_4}$ & 1 & N & & & & & & & N & \\ 
\hline
$\boldsymbol{\phi_4}$ & 2 & & N & & & & & & N & \\ 
\hline
$\boldsymbol{\phi_4}$ & 3 & & & P & & & & & N & \\ 
\hline
$\boldsymbol{\phi_4}$ & 4 & N & & & & & & & & P\\ 
\hline
$\boldsymbol{\phi_4}$ & 5 & & N & & & & & & & P\\ 
\hline
$\boldsymbol{\phi_4}$ & 6 & & & P & & & & & & P\\ 
\hline
\end{tabular}
\end{small}
\caption{Listing of positive and negative literals in $\phi$ by clause number: For each of the sub-formulas $\phi_1, \phi_2, \phi_3$ and $\phi_4$ we list which of the variables $b_{ij}$ appear in the respective clauses. The clauses are indexed sequentially. ``P'' indicates that a positive instance of the variable appears and ``N'' indicates that a negative instance of the variable appears.} \label{table:clauses}
\end{table}

Let $\phi_i{:}j$ denote the $j$th clause of $\phi_i$. Further, let $b_{ij} \hookrightarrow \neg b_{ij}$ denote the decision to swap the parity of the variable $b_{ij}$ everywhere in $\phi$, and so send $b_{ij}$ to $\neg b_{ij}$ and $\neg b_{ij}$ to $b_{ij}$.  

Let us assume that $\phi$ is Horn-rewritable. By $\phi_1{:}7$, we must have either $b_{11} \hookrightarrow \neg b_{11}$ or $b_{21} \hookrightarrow \neg b_{21}$. Suppose that $b_{11} \hookrightarrow \neg b_{11}$. By  $\phi_2{:}4$, $b_{32} \hookrightarrow \neg b_{32}$, and then, by either $\phi_1{:}8$ or $\phi_1{:}9$, $b_{21} \hookrightarrow \neg b_{21}$. It follows that we must have $b_{21} \hookrightarrow \neg b_{21}$. 

It follows then from $\phi_3{:}1$ that $b_{43} \hookrightarrow \neg b_{43}$, and then from $\phi_4{:}3$, that $b_{14} \hookrightarrow \neg b_{14}$. Finally, from $\phi_2{:}6$, $b_{32} \hookrightarrow \neg b_{32}$. However, this means we have a problem in clause $\phi_1{:}9$ or $\phi_1{:}12$ in that we have committed to both $b_{14} \hookrightarrow \neg b_{14}$ and $b_{32} \hookrightarrow \neg b_{32}$, yielding positive $b_{14}$ and $b_{32}$ literals in both clauses, and hence the formula $\phi$ is not Horn-rewritable. 
\end{proof}

\subsection{Example of Recognizing and Solving a CMP-SAT Instance in its Canonical Form} \label{app:recognizing_and_solving}

To illustrate the process of recognizing and solving a CMP-SAT instance in its canonical presentation, consider the following example:
\begin{flalign*}
	& (A \vee B \vee X) \land (\neg X \vee C \vee Y) \land (\neg Y \vee D \vee E) \\
	& F \vee G \\
	& H \vee I \vee J \\
	& (\neg I \vee \neg F) \land (\neg D \vee \neg I) \land (\neg D \vee \neg F) \\
	& (\neg C \vee \neg H) \land (\neg G \vee \neg C) \land (\neg H \vee \neg G) \\
	& \neg E \vee \neg J  
\end{flalign*}
First remove the bridge variables from first sub-formula to create long conjunction of positive literals. Then form $\oplus_{\leq 1}$ clauses from the last three sub-formulas:
\begin{flalign*}
	& A \vee B \vee C \vee D \vee E \\
	& F \vee G \\
	& H \vee I \vee J \\
	& \oplus_{\leq 1}(D, F, I) \\
	& \oplus_{\leq 1}(C, G, H) \\
	& \oplus_{\leq 1}(E,J)
\end{flalign*}

Next we turn the top three sub-formulas into tradeoff sets and pick representatives for each of the $\oplus_{\leq 1}$ statements:
\begin{flalign*}
	& \mathscr{T}_1 = \{A, B, C, D, E\} \\
	& \mathscr{T}_2 = \{F, G\} \\
	& \mathscr{T}_3 = \{H, I, J\}  \\
	& \oplus_{\leq 1}(D, F, I)~~\textrm{Rep:  }R_1\\
	& \oplus_{\leq 1}(C, G, H) ~~\textrm{Rep:  }R_2\\
	& \oplus_{\leq 1}(E,J)~~\textrm{Rep:  }R_3\\
\end{flalign*}
Next we replace any variables in the tradeoff sets with their representatives from the $\oplus_{\leq 1}$ statements and make a note of the mappings:
\begin{flalign*}
	& \mathscr{T}_1 = \{A, B, R_2, R_1, R_3\}~~C \hookrightarrow R_2, D \hookrightarrow R_1,  E \hookrightarrow R_3\\
	& \mathscr{T}_2 = \{R_1, R_2\}~~F \hookrightarrow R_1, G \hookrightarrow R_2\\
	& \mathscr{T}_3 = \{R_2, R_1, R_3\}~~H \hookrightarrow R_2, I \hookrightarrow R_1,  J \hookrightarrow R_3 
\end{flalign*}
Finally arriving at an easily solvable CMP. One solution is:  $T(\mathscr{T}_1) = A, T(\mathscr{T}_2) = R_1,$\\$T(\mathscr{T}_3) = R_2$, which in turn means we set the original variables $A = F = H = $ TRUE and all other \textbf{non-bridge} variables to FALSE to satisfy the associated SAT formula.

Setting the original variables $A = F = H = $ TRUE and all other non-bridge variables to FALSE we satisfy all of the associated SAT sub-formulas except the first one, which contains bridge variables:
\begin{flalign*}
	& (A \vee B \vee X) \land (\neg X \vee C \vee Y) \land (\neg Y \vee D \vee E) \\
	& F \vee G \\
	& H \vee I \vee J \\
	& (\neg I \vee \neg F) \land (\neg D \vee \neg I) \land (\neg D \vee \neg F) \\
	& (\neg C \vee \neg H) \land (\neg G \vee \neg C) \land (\neg H \vee \neg G) \\
	& \neg E \vee \neg J  
\end{flalign*}
To satisfy the first sub-formula, given that, of the primary variables $A, B, C, D, E$, only $A$ is TRUE, we must additionally set the bridge variable $Y=$ FALSE, and then also $X=$ FALSE.
 
 \subsection{Fragment Logic and Fragment Reductions} \label{app:rfragment_logic_and_fragment_reductions}
 
In this section we describe the process of mapping from a SAT problem to a corresponding FMP problem, and then working within the context of the FMP to see if we actually might have a CMP.  We illustrate the process by working backwards from the canonical embedding of our familiar problem (\ref{eqn:canonical_embedding2}). On the left below are the clauses of the canonical embedding, and on the right, the associated mapping to the FMP introduced in our proof of the NP-completeness of the FMP. Elements $b_\infty,...,b_{\infty + 11}$ are the auxillary elements introduced in that proof.
\begin{table} [h]
\begin{tabular}{ l | l }
$b_{11} \vee b_{21} \vee b_{31}$ & $\{\frac{1}{3}b_{11}, \frac{2}{3}b_{\infty}\}, \{\frac{1}{2}b_{21},\frac{1}{2}b_{\infty+1}\},\{\frac{1}{2}b_{31}, \frac{1}{2}b_{\infty+2}\}$ \\
$b_{12} \vee b_{32}$  & $\{\frac{1}{3}b_{12}, \frac{2}{3}b_{\infty+3}\}, \{\frac{1}{2}b_{32},\frac{1}{2}b_{\infty+4}\}$ \\ 
$b_{23} \vee b_{43}$ & $\{\frac{1}{2}b_{23}, \frac{1}{2}b_{\infty+5}\}, \{\frac{1}{2}b_{43},\frac{1}{2}b_{\infty+6}\}$ \\ 
$b_{14} \vee b_{44}$ & $\{\frac{1}{3}b_{14}, \frac{2}{3}b_{\infty+7}\}, \{\frac{1}{2}b_{44},\frac{1}{2}b_{\infty+8}\}$ \\ 
$\neg b_{11} \vee \neg b_{12}$ & $\{\frac{1}{2}b_{\infty}, \frac{1}{2}b_{\infty+9}\}, \{\frac{1}{2}b_{\infty+3}, \frac{1}{2}b_{\infty+10}\}$ \\
$\neg b_{11} \vee \neg b_{14}$ & $\{\frac{1}{2}b_{\infty}, \frac{1}{2}b_{\infty+9}\}, \{\frac{1}{2}b_{\infty+7}, \frac{1}{2}b_{\infty+11}\}$ \\
$\neg b_{12} \vee \neg b_{14}$ & $\{\frac{1}{2}b_{\infty+3}, \frac{1}{2}b_{\infty+10}\}, \{\frac{1}{2}b_{\infty+7}, \frac{1}{2}b_{\infty+11}\}$ \\
$\neg b_{21} \vee \neg b_{23}$ & $\{b_{\infty+1}\}, \{b_{\infty+5}\}$\\ 
$\neg b_{31} \vee \neg b_{32}$ & $\{b_{\infty+2}\}, \{b_{\infty+4}\}$\\ 
$\neg b_{43} \vee \neg b_{44}$ & $\{b_{\infty+6}\}, \{b_{\infty+8}\}$ 
\end{tabular}
\end{table}

Over the course of working through numerous examples we have distilled a set of simplification principles that we call the rules of \textbf{Fragment Logic}. 
Each application of Fragment Logic either involves the deletion/removal of fragments, or, if fragments are not removed, then the number of fragments is kept constant and either the number of integral fragments (fragments $pe$ with $p = 1$) is increased, the number of non-integral fragments is decreased, or both. Thus only a polynomial number of these operations can ever be performed. We refer to the process of iteratively applying Fragment Logic as \textbf{Fragment Reduction}. 
The first rule of Fragment Logic is the following:
\begin{enumerate} [R1.]
\item [R1.] One may eliminate any element for which the sum of all associated fragments is $\leq 1$. Note that this may leave some tradeoffs with fragments whose fractions sum to less than $1$.
\end{enumerate}

Applying this rule we obtain:\\

\smallskip
\begin{tabular}{ l | l }
$b_{11} \vee b_{21} \vee b_{31}$ & $\{\frac{2}{3}b_{\infty}\}, \{\frac{1}{2}b_{\infty+1}\},\{\frac{1}{2}b_{\infty+2}\}$ \\
$b_{12} \vee b_{32}$  & $\{\frac{2}{3}b_{\infty+3}\}, \{\frac{1}{2}b_{\infty+4}\}$ \\ 
$b_{23} \vee b_{43}$ & $\{\frac{1}{2}b_{\infty+5}\}, \{\frac{1}{2}b_{\infty+6}\}$ \\ 
$b_{14} \vee b_{44}$ & $\{\frac{2}{3}b_{\infty+7}\}, \{\frac{1}{2}b_{\infty+8}\}$ \\ 
$\neg b_{11} \vee \neg b_{12}$ & $\{\frac{1}{2}b_{\infty}\}, \{\frac{1}{2}b_{\infty+3}\}$ \\
$\neg b_{11} \vee \neg b_{14}$ & $\{\frac{1}{2}b_{\infty}\}, \{\frac{1}{2}b_{\infty+7}\}$ \\
$\neg b_{12} \vee \neg b_{14}$ & $\{\frac{1}{2}b_{\infty+3}\}, \{\frac{1}{2}b_{\infty+7}\}$ \\
$\neg b_{21} \vee \neg b_{23}$ & $\{b_{\infty+1}\}, \{b_{\infty+5}\}$\\ 
$\neg b_{31} \vee \neg b_{32}$ & $\{b_{\infty+2}\}, \{b_{\infty+4}\}$\\ 
$\neg b_{43} \vee \neg b_{44}$ & $\{b_{\infty+6}\}, \{b_{\infty+8}\}$ 
\end{tabular}

\smallskip

The next rule is the analog of recognizing and simplifying the ``at most one of'' relation in the context of the FMP:
\begin{enumerate} [R1.]
\item [R2.]Suppose we have a collection of tradeoffs $t_1,...,t_k$ that together use a set of elements $e_1,...,e_m$ such that the elements $e_1,...,e_m$ appear in no additional tradeoff beyond these $t_i$. Further, suppose that no two of the tradeoffs $t_1,...,t_k$ can be simultaneously satisfied. Then we may replace each $t_i ~(1 \leq i \leq k)$ by the single element tradeoff $\{e_1\}$.
\end{enumerate}

\noindent In our example we just have very simple cases where, e.g., $\{\frac{1}{2}b_{\infty+1}\}$ in the first tradeoff set, coupled with $\{b_{\infty+1}\}$ in the third-to-last tradeoff set, together allow us to replace the first tradeoff with $\{b_{\infty+1}\}$. However, below is a more sophisticated example 
\begin{figure}[h]
\scalebox{0.2}{\includegraphics{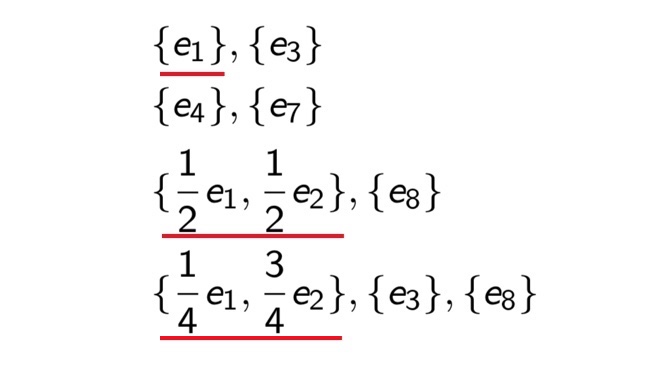}}
\end{figure}
where each of the underlined tradeoffs can be replaced by $\{e_1\}$.  Applying rule 2 to our example, and replacing the $\infty$ subscript with $0$, to assist with the optics, we obtain:
\begin{table} [h]
\begin{tabular}{ l | l }
$b_{11} \vee b_{21} \vee b_{31}$ & $\{\frac{2}{3}b_0\}, \{b_1\},\{b_2\}$ \\
$b_{12} \vee b_{32}$  & $\{\frac{2}{3}b_3\}, \{b_4\}$ \\ 
$b_{23} \vee b_{43}$ & $\{b_5\}, \{b_6\}$ \\ 
$b_{14} \vee b_{44}$ & $\{\frac{2}{3}b_7\}, \{b_8\}$ \\ 
$\neg b_{11} \vee \neg b_{12}$ & $\{\frac{1}{2}b_0\}, \{\frac{1}{2}b_3\}$ \\
$\neg b_{11} \vee \neg b_{14}$ & $\{\frac{1}{2}b_0\}, \{\frac{1}{2}b_7\}$ \\
$\neg b_{12} \vee \neg b_{14}$ & $\{\frac{1}{2}b_3\}, \{\frac{1}{2}b_7\}$ \\
$\neg b_{21} \vee \neg b_{23}$ & $\{b_1\}, \{b_5\}$\\ 
$\neg b_{31} \vee \neg b_{32}$ & $\{b_2\}, \{b_4\}$\\ 
$\neg b_{43} \vee \neg b_{44}$ & $\{b_6\}, \{b_8\}$ 
\end{tabular}
\end{table}

The next rule covers the case where we have an element whose fragments can be partitioned into two classes, one for which the fractions are high and the other for which the fractions are low, and such that we can either use the element to satisfy exactly one of the high fragments or all of the low fragments.

\begin{enumerate} [R1.]
\item [R3.] Let $k, \ell \in \mathbb{Z}^+$ with $\ell \geq 2$. Suppose an element $e$ appears a total of $k + \ell$ times across all tradeoffs and has $k$  fragments for which it appears with fractions $\{p_i\}_{i=1}^k$ such that all $p_i > \frac{1}{2}$, and $\ell$ fragments for which it appears with fractions $\{q_i\}_{i=1}^{\ell}$ such that all $q_i \leq \frac{1}{\ell}$. Further, suppose that $p_i + q_j > 1$ for any $i,j$.  Then we may replace all instances of $p_i e$ with $e$ and all instances of $q_j e$ with $\frac{1}{N}e$ for any $N \geq \ell$. Note that doing so may create tradeoffs with total weight, $\sum p_i > 1$, which we allow.
\end{enumerate}

\noindent Applying rule 3 we obtain:

\smallskip
\begin{tabular}{ l | l }
$b_{11} \vee b_{21} \vee b_{31}$ & $\{b_0\}, \{b_1\},\{b_2\}$ \\
$b_{12} \vee b_{32}$  & $\{b_3\}, \{b_4\}$ \\ 
$b_{23} \vee b_{43}$ & $\{b_5\}, \{b_6\}$ \\ 
$b_{14} \vee b_{44}$ & $\{b_7\}, \{b_8\}$ \\ 
$\neg b_{11} \vee \neg b_{12}$ & $\{\frac{1}{N}b_0\}, \{\frac{1}{N}b_3\}$ \\
$\neg b_{11} \vee \neg b_{14}$ & $\{\frac{1}{N}b_0\}, \{\frac{1}{N}b_7\}$ \\
$\neg b_{12} \vee \neg b_{14}$ & $\{\frac{1}{N}b_3\}, \{\frac{1}{N}b_7\}$ \\
$\neg b_{21} \vee \neg b_{23}$ & $\{b_1\}, \{b_5\}$\\ 
$\neg b_{31} \vee \neg b_{32}$ & $\{b_2\}, \{b_4\}$\\ 
$\neg b_{43} \vee \neg b_{44}$ & $\{b_6\}, \{b_8\}$ 
\end{tabular}

\smallskip

\noindent with the understanding that we may push the fractions $\frac{1}{N}$ as low as we need in any potential assignment of the associated elements $b_0, b_3$ and $b_7$.

\begin{enumerate} [R1.]
\item [R4.] If we have a tradeoff set with exactly two integral tradeoffs $\{\{e_1\},\{e_2\}\}$ and elsewhere the elements $e_1$ and $e_2$ appear solely in integral tradeoffs then we may remove the tradeoff set $\{\{e_1\},\{e_2\}\}$ and wherever $e_2$ appears replace it by $e_1$.
\end{enumerate}

\noindent This rule follows from the fact that we must use $e_1$ or $e_2$ to satisfy $\{\{e_1\},\{e_2\}\}$, and we can use the element we didn't chose exactly once elsewhere.  Applying rule 4, using the pairs of integral tradeoffs in the last three tradeoff sets, we obtain:
\begin{table} [h]
\begin{tabular}{ l | l }
$b_{11} \vee b_{21} \vee b_{31}$ & $\{b_0\}, \{b_1\},\{b_2\}$ \\
$b_{12} \vee b_{32}$  & $\{b_3\}, \{b_2\}$ \\ 
$b_{23} \vee b_{43}$ & $\{b_1\}, \{b_6\}$ \\ 
$b_{14} \vee b_{44}$ & $\{b_7\}, \{b_6\}$ \\ 
$\neg b_{11} \vee \neg b_{12}$ & $\{\frac{1}{N}b_0\}, \{\frac{1}{N}b_3\}$ \\
$\neg b_{11} \vee \neg b_{14}$ & $\{\frac{1}{N}b_0\}, \{\frac{1}{N}b_7\}$ \\
$\neg b_{12} \vee \neg b_{14}$ & $\{\frac{1}{N}b_3\}, \{\frac{1}{N}b_7\}$ \\
$\neg b_{21} \vee \neg b_{23}$ & \\ 
$\neg b_{31} \vee \neg b_{32}$ & \\ 
$\neg b_{43} \vee \neg b_{44}$ & 
\end{tabular}
\end{table}

The next rule amounts to recognizing the encoding of $\oplus_{\leq 1}(e_1,...,e_k)$ as $k \choose{2}$ disjuncts of pairs of negations.
\begin{enumerate} [R1.]
\item [R5.] Suppose we have a collection of elements $\{e_1,...,e_k\}$ and these elements appear in tradeoff sets of the form $\{\{\frac{1}{N} e_i\},\{\frac{1}{N} e_j\}\}$ for some $N \geq 2, \forall{i,j}: 1 \leq i,j \leq k$, and there are no other non-integral fragments containing any of the elements $e_1,...,e_k$. Then we may remove all the tradeoffs with fractional elements and, moreover, replace all appearances of $e_1,...,e_k$ with $e_1$.
\end{enumerate}

\noindent Applying rule 5 we obtain the final simplification, which is just a renamed version of the original CMP: 
\begin{table} [!htbp]
\begin{tabular}{ l | l }
$b_{11} \vee b_{21} \vee b_{31}$ & $\{b_0\}, \{b_1\},\{b_2\}$ \\
$b_{12} \vee b_{32}$  & $\{b_0\}, \{b_2\}$ \\ 
$b_{23} \vee b_{43}$ & $\{b_1\}, \{b_6\}$ \\ 
$b_{14} \vee b_{44}$ & $\{b_0\}, \{b_6\}$ \\ 
$\neg b_{11} \vee \neg b_{12}$ &  \\
$\neg b_{11} \vee \neg b_{14}$ &  \\
$\neg b_{12} \vee \neg b_{14}$ &  \\
$\neg b_{21} \vee \neg b_{23}$ & \\ 
$\neg b_{31} \vee \neg b_{32}$ & \\ 
$\neg b_{43} \vee \neg b_{44}$ & 
\end{tabular}
\end{table}

\medskip

The same process can be applied to the first, less canonical, translation of this problem that we made. The reduction is much more tedious but no less difficult. For this purpose we add two final Fragment Logic rules. The first rule is a convenience rule and can actually be dispensed with (see the proof of Theorem \ref{thm:fragment_reduction_is_polynomial}).

\begin{enumerate} [R1.]
\item [R6.] Suppose we have tradeoff sets $\{\{pe\},t\}$ and $\{\{p'e'\},t'\}$ for tradeoffs $t,t'$, $0 \leq p,p' \leq 1$ and elements $e,e'$. Moreover, suppose that any time either $e$ or $e'$ appears in additional tradeoffs they appear together, i.e. in tradeoffs of the form $\{qe, q'e',...\}$ where the ellipsis ... indicates that there may be other fractional elements in addition to $e, e'$. Further, suppose that in any such tradeoff $p + q > 1$ and $p' + q' > 1$. Then we may replace $\{\{pe\},t\}$ and $\{\{p'e'\},t'\}$ by the single tradeoff set $\{\{pe\}, t \cup t'\}$ where $t \cup t'$ is the tradeoff obtained by taking the union of all fragments in either $t$ or $t'$ and in cases where $pe \in t, p'e \in t'$, as lomg as $p + p\ \leq 1$ we add $(p + p')e$ to $t \cup t'$. If $p + p' > 1$ then $t \cup t'$ cannot be satisfied and so we can remove all the tradeoffs of the form $\{qe, q'e',...\}$ and replace $\{\{pe\},t\}$ and $\{\{p'e'\},t'\}$ by the single tradeoff, tradeoff set $\{\{pe\}\}$. 

\medskip

Analogously, if we have tradeoff sets $\{\{p_1 e_1\},...,\{p_N e_N\},t\}$ and $\{\{p'_1 e'_1\},...,\{p'_N e'_N\},t'\}$, and any time any of the $e_i$ and $e'_i$ appear in the additional tradeoffs they appear together and with fractions $q_i, q'_i$ such that $p_i + q_i > 1$ and $p'_i + q'_i > 1$. Then we may replace $\{\{p_1 e_1\},...,\{p_N e_N\},t\}$ and $\{\{p'_1 e'_1\},...,\{p'_N e'_N\},t'\}$ with the single tradeoff set $\{\{p_1 e_1\},...,\{p_N e_N\},t \cup t'\}$, with the same explanation and proviso about $t \cup t'$ as previously.
\end{enumerate}

\noindent The rationale for this rule, in its first variation, is that either we can satisfy both $\{pe\}$ and $\{p'e'\}$ in the first mentioned tradeoff sets, or we can satisfy some number of the tradeoffs $\{qe, q'e',...\}$ where $e, e'$ appear together, but we cannot satisfy a mixture of these (i.e., we can't satisfy $\{pe\}$ \textit{and} some number of the $\{qe, q'e',...\}$). Thus satisfying just $\{pe\}$ is equivalent, from a satisfiability perspective, to satisfying both $\{pe\}$ and $\{p'e'\}$. Finally, if we satisfy some number of the $\{qe, q'e',...\}$ then we must satisfy both $t$ and $t'$ and hence the construction of $t \cup t'$ as described. The rationale for the second variation is similar.

Here is an example taken from the reduction of the non-canonical representation of the CMP that uses the second variation of this rule: 
\begin{figure}[h]
\centerline{\scalebox{0.25}{\includegraphics{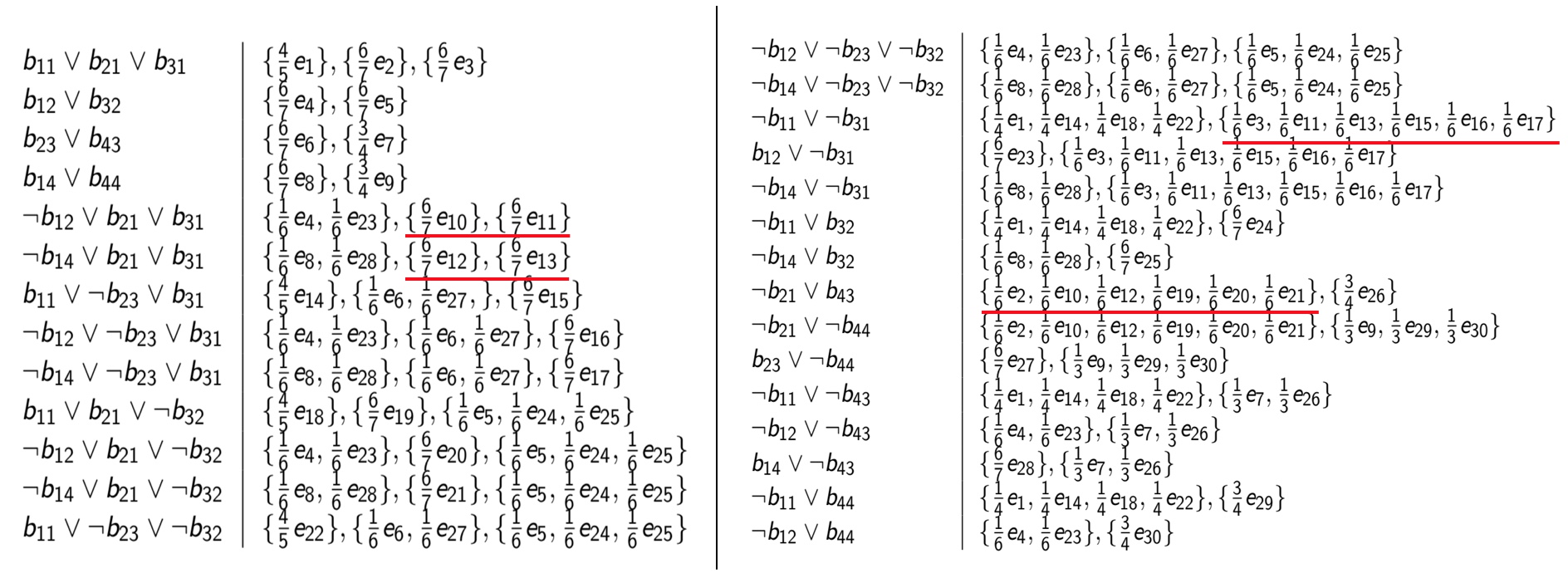}}}
\end{figure}

\noindent In this example, except for the two underlined tradeoffs where $e_{10}$ and $e_{12}$ appear separately, in all other tradeoff sets, they appear together in the same tradeoff (the last underlined tradeoff on the right is one example). The same is true for $e_{11}$ and $e_{13}$, with the first underlined tradeoff on the right serving as an example for that case.  Rule 6 thus applies and we may replace the two tradeoff sets $\{\{\frac{1}{6}e_4,\frac{1}{6}e_{23}\},\{\frac{6}{7}e_{10}\},\{\frac{6}{7}e_{11}\}\}$ and $\{\{\frac{1}{6}e_8,\frac{1}{6}e_{28}\},\{\frac{6}{7}e_{12}\},\{\frac{6}{7}e_{13}\}\}$ with the equi-satisfiable single tradeoff set $\{\{\frac{1}{6}e_4,\frac{1}{6}e_{23},\frac{1}{6}e_8,\frac{1}{6}e_{28}\},\{\frac{6}{7}e_{10}\},\{\frac{6}{7}e_{11}\}\}$.

\begin{enumerate} [R1.]
\item [R7.] Suppose we have tradeoff sets
\begin{eqnarray*}
	& t_1,...,t_k,\{pe,p_1 e_1,...,p_\ell e_\ell\} \\
	& t'_1,...,t'_{k'},\{p'e,p'_1 e'_1,...,p'_{\ell'} e'_{\ell'}\}
\end{eqnarray*}
with $k,k' \geq 1; \ell,\ell' \geq 0$ and $p + p' > 1$, and such that there are no further instances of $e$ or any of the elements $e_1,...,e_\ell$ and $e'_1,...,e'_{\ell'}$. Then the above tradeoff sets are equivalent to the single tradeoff set
\begin{equation*}
	t_1,...,t_k,t'_1,...,t'_{k'}.
\end{equation*}
We note that this last rule is the analog of recognizing bridge variables in a long $\bigvee$ clause.

\end{enumerate}

\begin{theorem} \label{thm:fragment_reduction_is_polynomial}
Fragment Reduction can be run to conclusion, in other words, until no further application of the rules of Fragment Logic is possible, in polynomial time and space.
\end{theorem}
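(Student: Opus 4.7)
The plan is to exhibit an integer-valued potential function $\Phi$ on FMP instances that strictly decreases with every application of a Fragment Logic rule, and whose initial value is polynomial in the input size. A natural candidate, inspired by the author's remark just before stating R1, is the lexicographic pair
\begin{equation*}
\Phi(\mathscr{F}) \;=\; \bigl(\, |F(\mathscr{F})|,\; |F(\mathscr{F})| - |I(\mathscr{F})| \,\bigr),
\end{equation*}
where $F(\mathscr{F})$ is the set of fragments appearing across all tradeoffs and $I(\mathscr{F}) \subseteq F(\mathscr{F})$ is the subset of integral fragments (those of the form $1 \cdot e$). With this choice, showing polynomial termination reduces to (i) verifying the monotonicity claim rule by rule, (ii) bounding the initial potential, and (iii) arguing that one pass through the rules to find an applicable instance takes polynomial time.

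For step (i) I would check each rule in turn. Rules R1, R2, R4, R5, and R7 all explicitly delete at least one fragment (R1 drops every fragment on an eliminated element, R2 shortens each of $k$ tradeoffs down to the singleton $\{e_1\}$, R4 discards the tradeoff set $\{\{e_1\},\{e_2\}\}$ outright, R5 removes all the binary fractional tradeoff sets, and R7 deletes the two multi-fragment tradeoffs $\{pe,\dots\}$ and $\{p'e,\dots\}$), so the first component of $\Phi$ strictly decreases. Rule R6 also strictly decreases the first component: it deletes an entire tradeoff set of size at least $N{+}1$ (or $2$ in the first variant) and adds at most $|t'|$ new fragments to the surviving set, possibly with merging that can only shrink things further. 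The only rule that preserves $|F|$ is R3, but there $k$ fragments $p_i e$ become integral while none leave the integral class, so $|F|-|I|$ strictly drops. Thus $\Phi$ strictly decreases on every application, as required.

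For step (ii), the initial $|F|$ is simply the number of fragments written down, which is at most the input length, and $|F| - |I| \le |F|$, so $\Phi$ starts at a pair of non-negative integers polynomial in $|\mathscr{F}|$, giving a polynomial bound on the number of rule applications. Since each rule only deletes fragments or rewrites existing ones, the total space never exceeds the input size, yielding the claimed polynomial space bound. The main obstacle I anticipate is step (iii): detecting applicability of R2 and R6 requires locating sets of tradeoffs with particular pairwise-incompatibility or shared-element structure, and one has to argue that a naive scan over tuples of bounded size (pairs of tradeoffs for R6, and, using the Second Normal Form Reduction of Theorem \ref{thm:fmp_2nd_normal_form} to keep fractional tradeoffs binary, pairs for R2 as well) suffices. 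Reducing to Second Normal Form at the outset is cheap by Theorems \ref{thm:fmp_1st_normal_form} and \ref{thm:fmp_2nd_normal_form}, and it caps the arity of the fragment-matching tests at a constant, so each search-and-apply iteration runs in time polynomial in the current (hence initial) instance size. Combined with the polynomial bound on the number of iterations from $\Phi$, this yields polynomial total time and space.
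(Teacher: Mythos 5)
Your proposal follows essentially the same route as the paper: the paper's own termination argument is exactly your potential function (it observes, just before stating R1, that every rule application either deletes fragments or keeps their number fixed while increasing the count of integral ones, hence only polynomially many applications are possible), and the paper likewise leans on the normal form reductions of Theorems \ref{thm:fmp_1st_normal_form} and \ref{thm:fmp_2nd_normal_form} to make each search-and-apply pass polynomial. The one place where your sketch is thinner than the paper's argument is the step you yourself flag as the main obstacle. Second Normal Form alone does not cap the arity of the tests for R2 and R5: R2 quantifies over an arbitrary collection $t_1,\dots,t_k$ of pairwise-unsatisfiable tradeoffs over elements appearing nowhere else, and R5 requires locating an all-pairs ($\binom{k}{2}$) family, neither of which is a bounded-size tuple scan. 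The paper's extra move is to apply R1 immediately after normalization (the second fragment of each binary fractional tradeoff is free, so it can be deleted), yielding an instance in which every tradeoff is a single fragment, and then to verify that every rule preserves this all-single-element-tradeoff (ASET) invariant. Under ASET, R2's condition collapses to all tradeoffs sharing one element $e$ with pairwise fractions summing above $1$, R5's all-pairs groups are found by a simple chaining argument, R6 becomes vacuous, and R7 reduces to a two-tradeoff-set check. You would need to add this invariant (or an equivalent structural restriction) to make your step (iii) go through; with it, your potential-function framing is a clean formalization of the paper's counting remark. Two small inaccuracies worth fixing: under ASET, R2 need not delete any fragment (it may only promote fractional singletons to integral ones, so it is the second component of $\Phi$ that drops), and your claim that R6 strictly decreases $|F|$ should instead be the observation that R6 never fires once ASET holds.
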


\begin{proof} Key to the proof is to first apply the two normal form reductions, theorems (\ref{thm:fmp_1st_normal_form}) and (\ref{thm:fmp_2nd_normal_form}), leaving fractional tradeoffs containing exactly two fragments, the second of which can be assumed to be free. We then apply rule R1 to remove all such second  elements leaving an FMP with all single element tradeoffs, some of which are fractional and some of which are integral.  Crucially, as we apply each of the rules of fragment logic we will preserve this all-single-element tradeoff (ASET) property.

Clearly R1 can be applied in polynomial time and space, preserving the ASET property. We assume that we apply this rule as needed so that there are never any free fragments to be concerned with.

Next consider rule R2. Given that the ASET property holds, the only way the criteria of the rule can hold is if all tradeoffs contain the same element $e_i$, with associated fragments $\{p_i e_i\}$ such that $p_i + p_j > 1$ for any $p_i, p_j$ with $i \neq j$. This rule can thus clearly be checked and implemented in polynomial time and space without disturbing the ASET property.

It is trivial to check that rules R3 and R4 can be checked and implemented in polynomial time and space, and also preserve the ASET property.

The trick to rule R5 is checking for the $k\choose{2}$ pairs of the form $\{\{\frac{1}{N} e_i\},\{\frac{1}{N} e_j\}\}$. There are some number of pairs of the form $\{\{\frac{1}{N} e_i\},\{\frac{1}{N} e_j\}\}$ in the FMP. By virtue of how the rule is stated if we have a pair $\{\{\frac{1}{N} e_i\},\{\frac{1}{N} e_j\}\}$, then $e_i$ and $e_j$ are either in some all-pairs group of elements or in none. Similarly, if we have a maximal chain, $\{\{\frac{1}{N} e_{i_1}\},\{\frac{1}{N} e_{i_2}\}\},\{\{\frac{1}{N} e_{i_2}\},\{\frac{1}{N} e_{i_3}\}\},...,\{\{\frac{1}{N} e_{i_{k-1}}\},\{\frac{1}{N} e_{i_k}\}\}$, then either the collection of elements $e_{i_1},...,e_{i_k}$ forms an all-pairs group or can be excluded from consideration. It is thus easy to see that we can find a (necessarily pairwise disjoint) full collection of all-pairs groups satisfying the conditions of the rule in polynomial time and space. The rule can also easily be applied in polynomial time and space and preserves the ASET property. 

Regarding rule R6, we note that the condition ``any time either $e$ or $e'$ appear in additional tradeoffs they appear together as $\{qe, q'e',...\}$'' must be vacuous, on account of the fact that we are maintaining the ASET property. Thus the conditions of the rule will render $e$ and $e'$ free and they can then both be removed by rule R1. Thus we see that rule R6 is a convenience rule, and only useful if we have not (as we have) gone through the steps of reducing the FMP via the normal form reductions, applying rule R1, and then preserving the ASET property.

Under the ASET property the conditions of rule R7 reduce to 
\begin{eqnarray*}
	& t_1,...,t_k,\{pe\} \\
	& t'_1,...,t'_{k'},\{p'e\}
\end{eqnarray*}
with $k,k' \geq 1$ and $p + p' > 1$. The recognition and application of the rule can then plainly be done in polynomial time and space while preserving the ASET property.
Since, as noted earlier, one can only sequentially perform a polynomial number of Fragment Reductions, the theorem follows.
\end{proof}

\end{document}